\definecolor{shadecolor}{rgb}{1, 0, 0}
\newcommand{\noun}[1]{\textsc{#1}}
\renewcommand{\cite}[1]{\citep{#1}}
\newenvironment{lyxcode}
{\par\begin{list}{}{
\setlength{\rightmargin}{\leftmargin}
\setlength{\listparindent}{0pt}% needed for AMS classes
\raggedright
\setlength{\itemsep}{0pt}
\setlength{\parsep}{0pt}
\normalfont\ttfamily}%
 \item[]}
{\end{list}}
\newcommand{\code}[1]{\texttt{#1}}
\newtheorem{theorem}{Theorem}[section]
\newtheorem{lemma}{Lemma}[section]
\newtheorem{definition}{Definition}[section]
\newtheorem{proposition}{Proposition}[section]
\global\long\def\NOOP{\mathbf{NOOP}}
\global\long\def\SOOP{\mathbf{SOOP}}
\global\long\def\rec{\multimap}
\global\long\def\dom#1{\mathcal{#1}}
\global\long\def\strfunarr{\multimap\!\rightarrow}
\global\long\def\Sig#1{\mathsf{s_{#1}}}
\global\long\def\subsign{\trianglelefteq}
\global\long\def\ext{\blacktriangleleft}
\begin{document}
	
\title[$\NOOP$: A Domain-Theoretic Model Of Nominally-Typed OOP]{$\NOOP$: A Domain-Theoretic Model Of Nominally-Typed Object-Oriented Programming}

%\title[In Nom. OOP Objects Are \emph{Not} Mere Records \& Inher. \emph{Is} Subtyp.]{In Nominally-Typed Object-Oriented Programming\\Objects Are \emph{Not} Mere Records\\and Inheritance \emph{Is} Subtyping}

\author[M. AbdelGawad]
{Moez AbdelGawad\smallskip\\
College of Mathematics and Econometrics, Hunan University\\
Changsha 410082, Hunan, P.R. China\smallskip\\
%Informatics Research Institute, SRTA-City\\
%New Borg ElArab, Alexandria, Egypt\smallskip\\
\code{moez@cs.rice.edu}\\}

\author[R. Cartwright]
{\\Robert Cartwright\smallskip\\
	Computer Science Department, Rice University\\
	Houston, Texas 77005, USA\smallskip\\
	\code{cork@cs.rice.edu}}

%\date{February 2014}
\date{July 2016}

\begin{abstract} 
The majority of industrial-strength object-oriented (OO) software
is written using nominally-typed OO programming languages. Extant
domain-theoretic models of OOP developed to analyze OO type systems
miss, however, a crucial feature of these mainstream OO languages: nominality.
This paper presents the construction of $\NOOP$ as the first domain-theoretic
model of OOP that includes full class/type names information
%  (a.k.a., nominal information)
found in nominally-typed OOP. Inclusion of nominal information
in objects of $\NOOP$ and asserting that type inheritance in statically-typed OO programming
languages is an inherently nominal notion allow readily proving that
type inheritance and subtyping are completely identified in these languages.
This conclusion is in full agreement with intuitions of developers
and language designers of these OO languages, and contrary to the belief that \textquotedblleft{}inheritance
is not subtyping,\textquotedblright{} which came from assuming non-nominal
(\emph{a.k.a.}, structural) models of OOP. 

To motivate the construction of $\NOOP$, this paper briefly presents the
benefits of nominal-typing to mainstream OO developers and OO language designers,
as compared to structural-typing. After presenting $\NOOP$, the paper further
briefly compares $\NOOP$ to the most widely known domain-theoretic models of OOP.
Leveraging the development of $\NOOP$, the comparisons presented in this paper
provide clear, brief and precise technical and mathematical accounts for the relation
between nominal and structural OO type systems. $\NOOP$, thus, provides a firmer
semantic foundation for analyzing and progressing nominally-typed
OO programming languages.
\end{abstract}

\maketitle

%\begin{keyword}
%  Object-Oriented Programming, Denotational Semantics, Nominative Type Systems,
% Structural Type Systems, $\NOOP$, Type Names, Inheritance, Subtyping, OOP, Java, C\#
%\end{keyword}
%\end{frontmatter}

\section{\label{sec:Intro}Introduction}

To evolve and improve the type systems of mainstream object-oriented programming languages
such as Java~\cite{JLS14}, C\#~\citeyearpar{CSharp2015}, C++~\citeyearpar{CPP2011}, and
Scala~\cite{Odersky14}, which utilize class name information in defining object types
and OO subtyping, a precise mathematical model of these languages is needed. A precise model
of nominally-typed OOP allows accurate reasoning and analysis of these mainstream
OO programming languages. Imprecise models, on the other hand, lead to inaccurate conclusions.

An object in nominally-typed OO languages is associated with its class\footnote{The term `type' is
overloaded. In this paper, the term has mainly two related but distinct meanings. The first
meaning, mainly used by OO software developers, is a syntactic one, that directly translates to
the expression `class, interface, or trait' (in OO programming languages that support these constructs). 
In this sense, each class, interface, or trait \emph{is} a type.  The second meaning for
`type,' mainly used by mathematicians and programming languages researchers, is a semantic
meaning referring to the set of instances of a corresponding class/interface/trait. In this
sense, each class, interface, or trait \emph{corresponds} to a type.  Usually the context
makes clear which sense of the two is meant, but, to emphasize, sometimes we use the term
`class' for the syntactic meaning. As such, unless otherwise noted the term `class' in this
paper should be translated in the mind of the reader to `class, interface or trait.'}
name and the class names of its superclasses, as part of the meaning
of the object. Class names, in turn, are associated with class contracts, which are usually expressed, informally, in code documentation. Class contracts are thus implicitly encoded in class names. %
%}

In nominally-typed OOP, two objects with the same structure but that have
different class name information are different objects, and they have
different types. The different class name information inside the two objects 
implies the two objects maintain different class contracts, and thus that the objects
are behaviorally dissimilar. The two objects are thus considered
semantically unequal. Further, in nominally-typed OO languages---where types and the
subtyping relation make use of class names and of the explicitly-specified type inheritance
relation between classes---instances of two classes that are not in the inheritance hierarchy
may not be replaced by each other (\emph{i.e.}, are not `assignment-compatible')
since they may not offer the degree of behavioral substitutability
intended by developers of the two classes.

Despite its clear semantic importance, class name information (henceforth,
`nominal information') that is embedded inside objects of many mainstream
OO programming languages is not included in the most recognized denotational
models of OOP that exist today. Models of OOP that lack nominal information
of mainstream OO languages are structural models of OOP, not nominal
ones. Examples of structurally-typed OO languages include O'Caml~\cite{OCamlWebsite}
(see~\cite{MacQueenMLOO02}) and research languages such as Modula-3~\cite{Cardelli89modula},
Moby~\cite{Fisher1999}, Strongtalk~\cite{Bracha1993}, and PolyTOIL~\cite{Bruce2003}. 
Structural models of OOP have led PL researchers to make some conclusions
about OOP that contradict the intuitions of the majority of mainstream
OO developers and language designers. For example, the agreement of type inheritance, at
the syntactic (\emph{i.e.,} program code) level, and subtyping, at
the semantic (\emph{i.e.}, program meaning) level, is a fundamental
intuition of OO developers using nominally-typed OO languages. However,
extant denotational models of OOP led to the inaccurate conclusion that ``inheritance
is not subtyping.''%~\citep[for more details, see~][]{InhSubtyNWPT13}.

Type inheritance, in class-based mainstream OO languages, is an inherently
nominal notion, due to the informal association of class names with
\emph{inherited} class contracts. Hence the discrepancy between conclusions
regarding inheritance that are based on a structural view of OOP and
the intuitions of the majority of mainstream OO developers, who adopt
a nominal view of OOP. This discrepancy motivated considering the
inclusion of nominal information in mathematical models of OOP.

This paper presents the construction of a mathematical model of OOP,
called $\NOOP$, that includes full nominal information of mainstream OO
programming languages. $\NOOP$ was first presented
in~\cite{NOOP} and its construction was summarized in~\cite{NOOPsumm}.

Having a model of OOP that includes
nominal information of nominally-typed OOP should enable progress
in the design of type systems of current and future mainstream OO
languages. Some features of the type systems of these languages
(\emph{e.g.}, generics)
crucially depend on nominal information. Accurately
understanding and analyzing these features, for the purposes of extending
the languages or designing new languages that include them, has proven
to be hard when using operational models of OOP or using structural denotational
models of OOP, which lack nominal information found in nominally-typed
OO languages. 
%As demonstrated in~\cite{AbdelGawad13},
 Having a nominal
domain-theoretic model of OOP should make the analysis of features
of these languages that depend on nominal information easier and more
accurate. From the point of view of OO software development, having
better mainstream OO languages should result in greater productivity
for software developers and in them producing robust high-quality
software.

This paper is organized as follows. Section~\ref{sec:Related-Research}
presents a list of research related to this paper. Section~\ref{sec:WhyNom}
presents in brief the value of nominal typing to mainstream OO developers.
Section~\ref{sec:Rec} then starts the formal presentation of $\NOOP$
by presenting a new records domain constructor, called `rec,' that
is used in constructing $\NOOP$. Section~\ref{sec:Signatures} presents
class signatures and other related signature constructs, which are
syntactic constructs used to embody the nominal information found
in nominally-typed OOP. Section~\ref{sec:NOOP} presents the construction
of $\NOOP$, using `rec' and signature constructs, then it
presents a proof of the identification of inheritance and subtyping
in nominally-typed OOP. Section~\ref{sec:NOOPvsSOOP} then presents in brief
a comparison of $\NOOP$ to the most well-known domain-theoretic models
of OOP, namely the two structural models developed by Cardelli and by Cook.
Section~\ref{sec:Conclusions} presents the main conclusions we reached
based on developing $\NOOP$ and on comparing it to other domain-theoretic models
of OOP. Section~\ref{sec:Future-Work} concludes this paper by presenting
further research that can be developed based on $\NOOP$.

\section{\label{sec:Related-Research}Related Research}

$\NOOP$ is a domain-theoretic model of nominally-typed OOP. Dana Scott invented
and developed---with others including Gordon Plotkin---the
fields of domain theory and denotational semantics \citep[\emph{e.g.}, see~][]{DTAL,Stoy77,CatRecDomEqs82,Scott82,GunterHandbook90,Gierz2003,DomTheoryIntro}.
%fields of domain theory and denotational semantics (\emph{e.g.}, see~\cite{Scott82,PlotkinDomains83,DTMonograph}).
The development of denotational semantics has been motivated by researching
the semantics of functional programming languages such as Lisp~\cite{McCarthyBasis63,McCarthy96} and ML~\cite{ML-LCF78,Milner97}.
%the semantics of functional programming languages such as Lisp~\cite{McCarthyBasis63} and ML~\cite{ML-LCF78}.

Research on the semantics of OOP has taken place subsequently. Cardelli built the
first widely known denotational model of OOP~\cite{Cardelli84,Cardelli88}.
%Cardelli built the first widely-known denotational model of OOP~\cite{Cardelli88}.
%Cardelli's model was a structural model that lacked nominal information.
Cardelli's work was pioneering, and naturally, given the research on modeling
functional programming extant at that time, the model Cardelli constructed
was a structural denotational model of OOP that lacked nominal information.\footnote{Significantly, Cardelli in
fact also hinted at looking for investigating nominal typing (on page~2 of~\cite{Cardelli87}).
Cardelli's hint, unfortunately, went largely ignored for years.}
%, and structural typing
%was rather \emph{assumed} superior to nominal typing instead, particularly after
%the publication	of Cook et al.'s and Bruce et al.'s work.}
Cook and his colleagues built on Cardelli's work to separate
the notions of inheritance and subtyping~\cite{CookDenotational89,Cook1989,CookInheritance90}. %
%the notions of inheritance and subtyping~\cite{CookDenotational89,CookInheritance90}.%
%\footnote{A detailed discussion of Cardelli's and Cook's work, and a comparison
%of $\NOOP$ to their work, is presented in~\cite{AbdelGawad13}.%
%}
Later, other researchers (such as ~\cite{BruceFoundations02} and ~\cite{SimonsTheory02})
promoted Cardelli and Cook's structural view of OOP, and promoted
conclusions based on this view.

Martin Abadi, with Luca Cardelli, later presented \emph{operational} models
of OOP~\cite{SemObjTypes94,TheoryOfObjects95}. These models also 
had a structural view of OOP. Operational models with a nominal view of OOP 
got later developed however. In their seminal
work, Atsushi Igarashi, Benjamin Pierce, and Philip Wadler presented
Featherweight Java (FJ)~\cite{FJ/FGJ} as an operational model of
a nominally-typed OO language. Even though not the first operational
model of nominally-typed OOP (for example, see~\cite{drossopoulou99},~\cite{nipkow98}
 and~\cite{flatt98,flatt99}), FJ is the most widely-known operational
model of (a tiny core subset of) a nominally-typed OO language, namely
Java.\footnote{It is worthy to mention that $\NOOP$---as a more foundational
domain-theoretic model of nominally-typed OO languages (including Java)---provides
a denotational justification for the inclusion of nominal information in
Featherweight Java.}%

% It is worthy to mention that $\NOOP$ was developed, partially,
%in response to the technical challenge Pierce presented in his LICS'03
%lecture~\cite{Pierce03} (after the development of FJ/FGJ was concluded),
%in which Pierce looked for precising the relation between structural
%and nominal OO type systems.
%\footnote{For example, $\NOOP$ makes one fewer assumption about nominally-typed
%%OOP than FJ does. As we see below, a proof of the identification
%of inheritance and subtyping, which is taken in\noun{~\cite{FJ/FGJ}}
%as an\emph{ }assumption, depends crucially on the inclusion of nominal
%information in $\NOOP$. As done in~\cite{AbdelGawad13}, discussing
%issues of OOP such as `type names', `self-types' and `binary
%methods'~\cite{BruceBinary94} and comparing nominally-typed OOP
%to structurally-typed OOP based on $\NOOP$ can be done on a more
%foundational level and with lesser assumptions than based on FJ.%
%} As usual in PL research, denotational and operational models of OOP
%play complementary roles, where denotational models are usually of
%more interest to programming language designers, while operational
%ones are usually of more interest to programming language implementers.

Other research that is similar to one presented here, but that had
different research interests and goals, is that of Reus and Streicher~\cite{Reus02,Reus02a,Reus03}.
%different research interests and goals, is that of Reus and Streicher~\cite{Reus02a,Reus03}.
In~\cite{Reus03}, an untyped denotational model of class-based OOP
is developed. Type information is largely ignored in this work (object
methods and fields have no type signatures) and some nominal information
is included with objects only to analyze OO dynamic dispatch. The
model of~\cite{Reus03} was developed to analyze mutation and imperative
features of OO languages and for developing specifications of OO software
and the verification of its properties.
 Analyzing the differences
between structurally-typed and nominally-typed OO type systems was
not a goal of Reus and Streicher's research, and in their work the
identification of inheritance and subtyping was, again (as in FJ), assumed rather
than proven as a consequence of nominality and nominal typing.

\section{\label{sec:WhyNom}The Value of Nominal-Typing in OOP}
%Summary of WhyNomTyp essay

In this section we briefly present the value of nominal-typing and nominal-subtyping to
OO software developers and OO language designers. More
details on the value of nominal-typing and nominal-subtyping can be found in~\cite{AbdelGawad2015}.

As hinted to in the Introduction (Section~\ref{sec:Intro}), the main semantic value of nominal-typing to mainstream OOP lies in the association of type
(\emph{i.e.,} class/interface/trait) names with behavioral contracts that are part of the public
interface of objects, making typing and subtyping in nominally-typed OO languages closer
to semantic typing and semantic subtyping than structural-typing and structural-subtyping are. Designing their software based on having public behavioral contracts allows OO developers to design robust software~\cite{bloch08}.

The semantic value of nominal type information leads nominally-typed and structurally-typed
OO languages to have different views of type names, where type names in nominally-typed OOP have \emph{fixed} meanings
(tied to the public contracts) while in structurally-typed OOP (in agreement with the tradition in functional programming) type names are viewed as mere `shortcuts for type expressions' that \emph{can} thus change their meanings, \emph{e.g.}, upon inheritance.  This difference in viewing type names leads
OO developers using structurally-typed OO languages to face problems---such as spurious
subtyping, missing subsumption, and spurious binary methods (see~\cite{AbdelGawad2015})---that are not found in nominally-typed OO languages.

Further, the identification of type inheritance with OO subtyping (`inheritance is subtyping')
resulting from nominal-typing (which we prove in this paper) enables nominally-typed OO languages
to present OO developers
with a simple conceptual model during the OO software design process.

Finally, due to the ubiquity of the need for objects in OOP to be ``autognostic''
(self-aware, \emph{i.e.} recursive, see~\cite{cook-revisited}) and given that recursive
data values can be typed using recursive types~\cite{MPS}, the ease by which recursive
types can be expressed in nominally-typed OO languages is a decided benefit
nominal-typing offers to OO software developers and designers~\cite{TAPL}. More details on the benefits of nominal-typing can be found in~\cite{AbdelGawad2015}.\\

Without further ado, we now start the presentation of $\NOOP$ as a model of
OOP that includes full nominal information found in many mainstream OO languages.

\section{\label{sec:Rec}`Rec' ($\rec$), A New Records Domain Constructor}

%We start the presentation of $\NOOP$ by introducing a new domain
%constructor that we use in constructing $\NOOP$.
For the purpose of constructing $\NOOP$, we introduce a new domain constructor.
 In addition to $\NOOP$
including nominal information of mainstream OOP, $\NOOP$ models records
as \emph{tagged finite functions} rather than infinite functions,
as another improvement over extant domain-theoretic models of OOP
(particularly that of Cardelli and %~\cite{Cardelli84,Cardelli88}
%(particularly that of Cardelli~\cite{Cardelli88} and
other models built directly on top of it, such as Cook's.)

Due to the finiteness of the shape of an object (the shape of an object is the set of names/labels of its fields and
methods), and due to the flatness of the domain of labels when labels are
formulated as members of a computational domain, modeling objects
in $\NOOP$ motivates defining a new domain constructor that is similar
to but somewhat different from conventional functional domain constructors.
This domain constructor, $\rec$, called `rec,' constructs tagged
finite functions, which we call \emph{record functions}. Record
functions are explicitly finite mathematical objects.
%The mathematical definition of $\rec$, and proofs of its mathematical properties, 
%which we do not present in this paper for the sake of brevity, are presented in
%detail in Ch.~4 and Appendix~B of~\cite{NOOP}.

A domain $\dom R=\dom L\rec\dom D$, constructed using $\rec$, is the
domain of record functions modeling records with labels from a flat
domain $\dom L$ of labels to an arbitrary domain $\dom D$ of values.
Below we present the records domain constructor, $\multimap$,
then we discuss its mathematical properties. The definition of $\rec$
makes use of standard definitions of basic domain theory (See, for example,~\cite{DomTheoryIntro}. A
summary of domain theory notions used to construct $\NOOP$ is presented in~\cite{DomThSummCOOP14} and in Appendix~A of~\cite{NOOP,NOOPbook}.)

\subsection{Record Functions}

A record can be viewed as a finite mapping from a set of labels (as
member names) to fields or methods. Thus, we model records using explicitly
finite record functions. A \emph{record function} is a finite function
paired with a tag representing the input domain of the function\emph{.
}The tag of a record function modeling a record represents the set
of labels of the record. In agreement with the definition of shapes
of objects, we similarly call the set of labels of a record the \emph{shape}
of the record. The tag of a record function thus tells the shape of
the record.

\subsection{\label{sec:Definition-of-Records}Definition of $\multimap$}

Let $\dom L$ be the flat domain containing all record labels plus
an extra improper bottom label, $\bot_{\dom L}$, that makes $\dom L$
be a domain. (All computational domains must have a bottom element.)
Let $\dom D$ be an arbitrary domain, with approximation ordering
$\sqsubseteq_{\dom D}$ and bottom element $\bot_{\dom D}$. Domain
$\dom D$ contains the values that members of records are mapped to.

Let $\Subset$ denote the subdomain relation (see Definition 6.2 in~\cite{DomTheoryIntro}.)
If we let $\dom L_{f}$ range over arbitrary finite subdomains of
$\dom L$ (all subdomains $\dom L_{f}$ contain $\bot_{\dom L}$),
then we define the domain $\dom R=\dom L\multimap\dom D$ as the domain
of record functions from $\dom L$ to $\dom D$, where the universe,
$\left|\dom R\right|$, of domain $\dom R$ is defined by the equation
\begin{equation}
\left|\dom R\right|=\{\bot_{\dom R}\}\cup\bigcup_{\dom L_{f}\Subset\dom L}R(\dom L_{f},\dom D)\label{eq:rec-univ}
\end{equation}
with sets $R(\dom L_{f},\dom D)$ defined as
\begin{equation}
R(\dom L_{f},\dom D)=\{tag(\left|\dom L_{f}\right|\backslash\{\bot_{\dom L}\})\}\times\left|\dom L_{f}\strfunarr\dom D\right|\label{eq:rec-Rf}
\end{equation}
and where $tag$ is a function that maps the shape corresponding to
a domain $\dom L_{f}$ to a unique tag in a countable set of tags
(whose exact format does not need to be specified), and where $\dom L_{f}\strfunarr\dom D$
is the standard domain of strict continuous functions from $\dom L_{f}$ into
$\dom D$. Tags are needed in record functions to ensure that the
records domain constructor is a continuous, in fact \emph{computable}, domain
constructor.

To illustrate, using $\multimap$ a record $\mathsf{r}=\{l_{1}\mapsto d_{1},\cdots,l_{k}\mapsto d_{k}\}$
is modeled by a record function 
$
r=(tag(\{l_{1},\cdots,l_{k}\}),\{(\bot_{\dom L},\bot_{\dom D}),(l_{1},d_{1}),\cdots,(l_{k},d_{k})\})
$. It should be noted that $\multimap$ allows constructing the (unique)
record function $$(tag(\{\}),\{(\bot{}_{\dom L},\bot_{\dom D})\})$$
that models the empty record (one with an empty set of labels, for
which $\left|\dom L_{f}\right|=\{\bot_{\dom L}\}$.)

The approximation ordering, $\sqsubseteq_{\dom R}$, over elements
of $\dom R$ is defined as follows. The bottom element $\bot_{\dom R}$
approximates all elements of the domain $\dom R$. Non-bottom elements
$r$ and $r'$ in $\dom R$ with unequal tags are unrelated to one
another. On the other hand, elements $r$ and $r'$ with the same
tag are ordered by their embedded functions (which must be elements
of the same domain.) Formally, for two non-bottom record functions
$r,r'$ in $\dom R$ that are defined over the \emph{same }$\dom L_{f}$,
where $\left|\dom L_{f}\right|=\{\bot_{\dom L},l_{1},\cdots,l_{k}\}$,
if
\[
r=(tag(\{l_{1},\cdots,l_{k}\}),\{(\bot_{\dom L},\bot_{\dom D}),(l_{1},d_{1}),\cdots,(l_{k},d_{k})\})
\]
and
\[
r'=(tag(\{l_{1},\cdots,l_{k}\}),\{(\bot_{\dom L},\bot_{\dom D}),(l_{1},d'_{1}),\cdots,(l_{k},d'_{k})\})
\]
where $d_{1},\cdots,d_{k}$ and $d'_{1},\cdots,d'_{k}$ are elements
in $\dom D$, then we define
\[
r\sqsubseteq_{\dom R}r'\Leftrightarrow\forall_{i\leq k}(d_{i}\sqsubseteq_{\dom D}d'_{i})\label{eq:rec-approx}
\]

Having defined the records domain constructor $\rec$, we now discuss
its mathematical properties. \begin{theorem}
Given a flat countable domain of labels $\dom L$ and an arbitrary
domain $\dom D$, $\dom L\multimap\dom D$ is a domain.
\end{theorem}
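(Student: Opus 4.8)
The plan is to verify that $\dom R=\dom L\rec\dom D$ satisfies each clause in the definition of a domain: that $\sqsubseteq_{\dom R}$ is a partial order with a least element, that $\dom R$ is directed-complete, and---depending on the precise notion of domain adopted from the cited reference---that it is bounded-complete and $\omega$-algebraic. The guiding observation is structural: by the definition of the approximation ordering, two non-bottom record functions are comparable only when they carry the \emph{same} tag, in which case they are ordered exactly like the strict continuous functions they embed. Thus $\dom R$ is a disjoint family of the function domains $\dom L_f\strfunarr\dom D$, one for each finite shape, glued together below by the single fresh element $\bot_{\dom R}$---in effect a separated sum of the $\dom L_f\strfunarr\dom D$ indexed by the finite subdomains $\dom L_f\Subset\dom L$. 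Moreover, since $\dom L_f$ is a flat finite domain, a strict continuous function out of it is determined freely by its values on the finitely many proper labels (monotonicity and continuity being automatic), so each component $\dom L_f\strfunarr\dom D$ is isomorphic to a finite power of $\dom D$. I would record this simplification first, as it reduces every component to a known domain.

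First I would check the poset axioms. Reflexivity, antisymmetry, and transitivity of $\sqsubseteq_{\dom R}$ follow componentwise from the corresponding properties of $\sqsubseteq_{\dom D}$ on each coordinate, together with the stipulations that $\bot_{\dom R}$ lies below everything and that distinct tags are incomparable; these are routine once the case split on tags is made, and $\bot_{\dom R}$ serves as the required least element by fiat.

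The substantive step is directed-completeness, and this is where the tag structure does the real work. Given a directed set $X\subseteq\dom R$, if $X\subseteq\{\bot_{\dom R}\}$ its least upper bound is $\bot_{\dom R}$; otherwise I would argue that all non-bottom members of $X$ must share a single tag. Indeed, two elements with different tags have no common upper bound in $\dom R$ (the only element below both is $\bot_{\dom R}$, and nothing lies above either of them outside its own tag-component), so they cannot both occur in a directed set. Consequently the non-bottom part of $X$ lives entirely inside one component $\dom L_f\strfunarr\dom D$; stripping the common tag turns it into a directed subset of that function domain, whose least upper bound exists because $\dom L_f\strfunarr\dom D$---a finite power of the domain $\dom D$---is itself a cpo. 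Re-tagging that supremum yields $\bigsqcup X$ in $\dom R$. I expect this ``a directed set cannot straddle two tags'' lemma to be the crux of the whole argument; everything else is bookkeeping around it.

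Finally, if the ambient notion of domain requires bounded-completeness and $\omega$-algebraicity, I would dispatch them the same way. Bounded-completeness is inherited within each component (a finite power of the bounded-complete domain $\dom D$ is bounded-complete) and is vacuous across components, since differently-tagged elements are never consistent. For algebraicity, the compact elements of $\dom R$ are $\bot_{\dom R}$ together with the tagged images of the compact elements of each $\dom L_f\strfunarr\dom D$, and algebraicity then follows componentwise. Countability is exactly where the hypotheses on $\dom L$ enter: a countable flat $\dom L$ has only countably many finite shapes, hence countably many tags, and each component---being a finite power of the $\omega$-algebraic domain $\dom D$---contributes only countably many compacts, so their union together with $\bot_{\dom R}$ is still countable. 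Assembling these observations gives that $\dom L\rec\dom D$ is a domain.
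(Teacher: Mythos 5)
Your proposal is correct, but it takes a genuinely different route from the paper's proof. You verify the order-theoretic axioms of a domain directly---least element, directed-completeness, bounded-completeness, $\omega$-algebraicity---by exploiting the structural observation that $\dom L\rec\dom D$ is a separated sum, over the finite shapes $\dom L_{f}\Subset\dom L$, of components $\dom L_{f}\strfunarr\dom D\cong\dom D^{k}$, with your ``a directed set cannot straddle two tags'' lemma doing the real work; everything else then reduces to known facts about finite powers of $\dom D$. The paper instead works in the finitary-basis, effective-presentation framework of its cited domain-theory reference: it explicitly enumerates the finite elements of $\dom L\rec\dom D$ (indexing shapes by binary expansions of natural numbers and value-tuples by Cantor tupling), shows that consistency of finite elements is decidable and their lubs recursive, and concludes that these elements form a countable finitary basis---which, in that framework, is precisely what it means for $\dom L\rec\dom D$ to be a domain. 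The two arguments establish the same fact (your hedge about which axioms the ambient notion of domain requires resolves to the paper's finitary bases, equivalent to your countably-based Scott domains), but the paper's construction buys strictly more than domainhood: the effective presentation is what later supports the claims that $\rec$ is a \emph{computable} domain constructor and that the domains built with it are subdomains of Scott's universal domain $\dom U$, facts the paper needs for the least-fixed-point construction of $\NOOP$ and for the filtering theorem. Your argument is more elementary and self-contained, but under your approach those computability claims would need a separate proof; conversely, your explicit identification of the tagged components as finite powers of $\dom D$ is a clean simplification that the paper leaves implicit.
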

\begin{proof}
See Appendix~\ref{sec:Proofs}.
\end{proof}

Because in the construction of $\NOOP$ we use $\multimap$ to construct
domains as least fixed points of functions over domains, where the
constructed domains need to be subdomains of Scott's universal domain,
$\dom U$, we need to ascertain that $\multimap$ has the domain-theoretic
properties needed for it to be used inside these functions. We thus
need to prove that $\multimap$ is a continuous function over its
input domain $\dom D$, \emph{i.e}., that, as a function over domains,
$\multimap$ is monotonic with respect to the subdomain relation,
$\Subset$, and that $\multimap$ preserves least upper bounds of domains under
that relation.

\begin{theorem}
Domain constructor $\multimap$ is a continuous function over flat domains $\dom L$ and
arbitrary domains $\dom D$.
\end{theorem}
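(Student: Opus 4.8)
The plan is to follow the paper's own reduction: since continuity of a function on domains ordered by the subdomain relation $\Subset$ amounts to monotonicity together with preservation of directed least upper bounds, I would establish these two properties separately for the map $\dom D \mapsto \dom L \multimap \dom D$, with $\dom L$ a fixed flat label domain. Throughout I would exploit the explicit description of the universe in equations~(\ref{eq:rec-univ}) and~(\ref{eq:rec-Rf}): a non-bottom element of $\dom L \multimap \dom D$ is a tagged finite strict function, and crucially the tag $tag(|\dom L_f| \backslash \{\bot_{\dom L}\})$ depends only on the label subdomain $\dom L_f$ and not at all on $\dom D$. Thus the stratification of the record domain into the tagged pieces $R(\dom L_f,\dom D)$ is identical for every choice of $\dom D$, and only the value component $|\dom L_f \strfunarr \dom D|$ varies with $\dom D$.

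For monotonicity I would assume $\dom D_1 \Subset \dom D_2$ and show $\dom L \multimap \dom D_1 \Subset \dom L \multimap \dom D_2$. The key sub-fact is that the strict continuous function-space constructor $\dom L_f \strfunarr (-)$ is itself monotone with respect to $\Subset$, a standard property of the function-space constructor on domains; hence $|\dom L_f \strfunarr \dom D_1| \subseteq |\dom L_f \strfunarr \dom D_2|$ within each fixed stratum. Since the tags agree stratum-by-stratum, taking the disjoint union over all finite $\dom L_f \Subset \dom L$ and adjoining $\bot_{\dom R}$ yields $|\dom L \multimap \dom D_1| \subseteq |\dom L \multimap \dom D_2|$, and the approximation ordering $\sqsubseteq_{\dom R}$ --- which compares only record functions of equal tag, componentwise --- is inherited directly from the monotone passage of $\sqsubseteq_{\dom D_1}$ into $\sqsubseteq_{\dom D_2}$. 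Verifying the remaining subdomain conditions (preservation of $\bot$ and closure under the relevant least upper bounds) reduces to the corresponding facts for each strict function space.

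For preservation of directed least upper bounds I would take a directed family $\{\dom D_i\}_{i \in I}$ with lub $\dom D = \bigsqcup_i \dom D_i$, so that $|\dom D| = \bigcup_i |\dom D_i|$, and argue $|\dom L \multimap \dom D| = \bigcup_i |\dom L \multimap \dom D_i|$. The inclusion $\supseteq$ is immediate from the monotonicity just proved. The nontrivial inclusion $\subseteq$ is exactly where modeling records as \emph{finite} functions pays off: an arbitrary non-bottom element of $\dom L \multimap \dom D$ is a record function $(tag(\{l_1,\dots,l_k\}),\{(\bot_{\dom L},\bot_{\dom D}),(l_1,d_1),\dots,(l_k,d_k)\})$ with only finitely many non-bottom values $d_1,\dots,d_k \in |\dom D| = \bigcup_i |\dom D_i|$. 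Each $d_j$ lies in some $\dom D_{i_j}$, and by directedness these finitely many indices admit a common upper bound $i \in I$ with all $d_j \in |\dom D_i|$; the record function therefore already belongs to $\dom L \multimap \dom D_i$. A parallel remark handles the strict-function component, since a strict continuous function out of the finite domain $\dom L_f$ is determined by its finitely many values.

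I expect the lub-preservation step to be the main obstacle, and specifically the bookkeeping that turns ``finitely many values, each in some $\dom D_i$'' into ``all values in a single $\dom D_i$'' via directedness --- the very place where an \emph{infinite} function-space constructor would fail. A secondary point requiring care is confirming that the directed lub in the preorder of subdomains of the universal domain $\dom U$ is computed as the set-theoretic union at the level of universes (so that indeed $|\dom D| = \bigcup_i |\dom D_i|$, and likewise on the record side), after which the finiteness argument applies uniformly. Since the tag assignment is independent of $\dom D$, the same finiteness reasoning also delivers continuity in the label argument $\dom L$ among flat domains, completing the claim.
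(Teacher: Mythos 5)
Your proposal is correct and takes essentially the same approach as the paper: the paper likewise splits continuity into a monotonicity lemma (proved stratum-by-stratum, using the fact that tags are independent of $\dom D$, and then checking that the approximation, consistency, and lub relations restrict properly) and a lub-preservation lemma whose crux is exactly your finiteness argument---a record function has only finitely many values, so it already lies in some single $\dom L\multimap\dom D_{i}$. The only differences are cosmetic: the paper spells out the element-level inclusion $\left|\dom L_{f}\strfunarr\dom D\right|\subseteq\left|\dom L_{f}\strfunarr\dom D'\right|$ rather than citing monotonicity of $\strfunarr$ as a black box, and it works with chains of domains rather than arbitrary directed families.
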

\begin{proof}
See Appendix~\ref{sec:Proofs}.
\end{proof}

\section{\label{sec:Signatures}Class Signatures}

In this section we present formal definitions for class signatures
and related constructs. Class signatures and other signature constructs
are syntactic constructs that capture nominal information found in
objects of mainstream OO software. Embedding class signature closures
(formally defined below) in objects of $\NOOP$ makes them nominal
objects, thereby making $\NOOP$ objects more precise models of objects
in mainstream OO languages such as Java, C\#, C++, and Scala.

Class signatures formalize the notion
of object interfaces. A class
signature corresponding to a class in nominally-typed mainstream OOP
is a concrete expression the interface of the class, \emph{i.e.}, of how instances of the class should be viewed and
interacted with by other objects (``the outside world'').\footnote{Object interfaces are also discussed in~\cite{AbdelGawad2015}, ~\cite{OOPOverview13} and Ch.~2 of~\cite{NOOPbook}.}

To capture nominal information of nominally-typed mainstream OOP,
we define three syntactic signature constructs: (1) class signatures,
(2) class signature environments, and (3) class signature closures.
Additionally, fields and methods, respectively, have (4) field signatures
and (5) method signatures.

\subsection{Class Signatures}

If $\mathsf{N}$ is the set of all class names, and $\mathsf{L}$
is the set of all member (\emph{i.e.}, field and method) names, we
define a set $\mathsf{S}$ that includes all class signatures by the
equation 
\begin{equation}
\mathsf{S}=\mathsf{N}\times\mathsf{N}^{*}\times\mathsf{FS^{*}}\times\mathsf{MS^{*}}\label{eq:signature}
\end{equation}
where $\times$ and $^{*}$ are the cross-product and finite-sequences
set constructors, respectively, $\mathsf{FS}=\mathsf{L}\times\mathsf{N}$
is the set of field signatures, and $\mathsf{MS}=\mathsf{L}\times\mathsf{N}^{*}\times\mathsf{N}$
is the set of method signatures.

The equation for $\mathsf{S}$ expresses that a \emph{class signature}
corresponding to a certain class is composed of four components:
\begin{enumerate}
\item The class name (also used as a \emph{signature name} for the class
signature),
\item A finite sequence of names of \emph{immediate supersignatures} of
the signature, \emph{i.e.}, of signatures corresponding
to immediate superclasses of the class,
\item A finite sequence of field signatures corresponding to class fields, and
\item A finite sequence of method signatures corresponding to class methods.
\end{enumerate}
The use of signature names (members of $\mathsf{N}$) inside signatures
characterizes class signatures as nominal constructs, where two signatures
with different names but that are otherwise equal are different signatures.

The second component of a signature, a (possibly empty) sequence of
signature names (\emph{i.e.}, a member of $\mathsf{N}^{*}$), is the
\emph{immediate supersignature names} component of the class signature.
Having names of immediate supersignatures of a class signature explicitly\emph{
}included as a component of the class signature is an essential and
critical feature in the modeling of nominal subtyping in nominally-typed
OOP. Explicitly specifying the supersignatures of a class signature
identifies the nominal structure of the class hierarchy immediately
above the named class. This also agrees with the inheritance of the
contract associated with class names, which is a crucial semantic
component of what is intended to be inherited in nominally-typed mainstream
OOP.

The equation for field signatures expresses that a \emph{field signature}
is a pair of a field name (a member of $\mathsf{L}$) and a class
signature name. Similarly, the equation for method signatures expresses
that a \emph{method signature} is a triple of a method name, a sequence
of class signature names (for the method parameters), and a signature
name (for the method result).

Not all members of set $\mathsf{S}$ are class signatures. To agree
with our intuitions about describing the interfaces of classes and
their instances, a member $s$ of $\mathsf{S}$ is a class signature
if its supersignature names component, its field signatures component
and its method signatures component (\emph{i.e.}, the second, third
and fourth components of $s$) have no duplicate signature names,
field names, and method names, respectively (For simplicity, method
overloading is not modeled in our model of OOP.) It should be noted,
however, that field names and method names are in separate name spaces
and thus we allow a field and a method to have the same name.

Information in class signatures is derived from the text of classes
of OO programs. Given that interfaces of objects are the basis for
defining types in OO type systems, class signatures are the formal
basis for nominally-typed OO type systems, so as to confirm that objects
are used consistently and properly within a program (\cite{AbdelGawad2015}, Ch.~2 of~\cite{NOOPbook},
 and~\cite{OOPOverview13}, give more details on types and typing in OOP.)

\subsection{\label{sub:Signature-Environments}Signature Environments}

A \emph{signature environment }is a finite set of class signatures
that has unique class names, where each signature name is associated
with exactly one class signature in the environment. %
%\footnote{
(Accordingly, function application notation can be used to refer to
particular class signatures in a signature environment. If $nm$ is
a signature name guaranteed to be the name of some class signature
in a signature environment $se$, we use function application notation,
$se(nm)$, to refer to this particular class signature.) %
%}
 In addition to requiring the uniqueness of signature names, a finite
set of class signatures needs to satisfy certain consistency conditions
to function as a signature environment. A signature environment specifies
two relations between signature names: an immediate supersignature
relation and a direct-reference (adjacency) relation (The first relation
is a subset of the second.) These two relations can be represented
as directed graphs. The consistency conditions on a signature environment
constrain these two relations and their corresponding graphs.

As such, a finite set $se$ of class signatures is a signature environment
if and only if 
%\begin{enumerate}
(i) \label{enu:se-ref-closed}A class signature, with the right signature
name, belongs to $se$ for each signature reference in each class
signature of $se$,
(ii) \label{enu:se-ss-no-cycles}The graph for the supersignatures relation
for $se$ is an acyclic graph %
%\footnote{
(This constraint forces any signature environment to have at least
one class signature that has no supersignatures, \emph{i.e.}, its
second component is the empty sequence)%
%}
, and
(iii) \label{enu:se-ss-mem-inclusion}The set of field signatures and
method signatures of each class signature $s$ in $se$ is a superset
of the set of field signatures and method signatures of each supersignature
named by the supersignatures component of $s$.
%\end{enumerate}

In agreement with inheritance in mainstream OO languages, the last
condition makes class signatures in signature environments reflect
the explicit inheritance information in class-based OOP, by requiring
a class signature to only extend (\emph{i.e.}, add to) the set of
members supported by an explicitly-specified supersignature. Requiring
the members of a class signature to be a superset of the members of
all of its supersignatures means that \emph{exact} matching of member
signatures is required. This requirement thus enforces an \emph{invariant
subtyping} rule for field and method signatures, mimicking the rule
used in mainstream OO languages (such as Java and C\#) before the
addition of generics. This condition can be relaxed but we do not
do so in this paper. More details are available in~\cite{NOOP}.

\subsection{\label{sub:Signature-Closures}Signature Closures}

Inside a class signature, class names can be viewed as ``pointers''
that refer to other class signatures. Without bindings of class names
to corresponding class signatures, a single class signature that has
name references to other class signatures is not a closed entity on
its own. This motivates the notion of a signature closure. A closure
of a class signature is a set of class signatures (a signature environment,
in particular) that offers bindings to class names referred to in
all elements of the set, such that the whole set has no ``dangling
pointers'' in its references to other class signatures (\emph{i.e.},
is referentially-closed) and has no redundant class signatures relative
to some main class signature in the set (called the root class signature
of the closure.) A signature closure\emph{ }thus ``closes'' the
root class signature by providing bindings for all class names referenced,
directly or indirectly, in the signature. This motivates the following
formal definition of signature closures.

A \emph{signature closure} is a pair of a signature name and a signature
environment. A pair $sc=(nm,se)$ of a signature name $nm$ and a
signature environment $se$ is a signature closure if and only if
there exists a class signature $s$ in $se$ with signature name $nm$
and\emph{ }if the direct-reference (adjacency) relation corresponding
to $se$ is referentially-closed relative to $s$, and if this relation
is the smallest such relation. Class signature $s$ is then called
the \emph{root class signature} of $sc$. Relative to the root class
signature, a signature environment is minimal, \emph{i.e.}, contains
no unnecessary class signatures. This minimality condition ensures
that all class signatures in the signature environment of a signature
closure are accessible via paths in the adjacency graph of the signature
environment starting from (the node in the graph corresponding to)
the root signature name, \emph{i.e.}, that the signature environment
has no redundant class signatures unnecessary for the root class signature.

Similar to a single class signature, when viewed as a ``closed class
signature'' a signature closure has a name: namely,\emph{ }that of
its root class signature; has member signatures: namely, field and
method signatures of its root class signature; has a fields shape
and a methods shape: namely, those of its root class signature; and
it has immediate supersignature names: namely, those of its root class
signature. A signature closure, not just a class signature, is the
full formal expression of the notion of object interfaces.
%(The fields shape is the set of field names, and similarly for methods.
%See~\cite{OOPOverview13} and Ch.~2 in~\cite{NOOPbook} for more details on shapes and a %discussion
%of object interfaces.)
Each class in a class-based OOP program has a corresponding class
signature and a corresponding class signature closure. The nominal
information in a class signature closure is an invariant of all instances
of the class (including the behavioral contracts associated with class names.)

\subsection{\label{sub:Signature-Equality}Relations on Signatures}

For class signatures $\Sig 1=(\mathsf{nm_{1}},\mathsf{nms_{1}},\mathsf{fss_{1}},\mathsf{mss_{1}})$
and $\Sig 2=(\mathsf{nm_{2}},\mathsf{nms_{2}},\mathsf{fss_{2}},\mathsf{mss_{2}})$\textsf{,}
we define
%\begin{eqnarray*}
$\Sig 1=\Sig 2 \Leftrightarrow (\mathsf{nm_{1}}=\mathsf{nm_{2}})\wedge(\mathsf{nms}_{1}\equiv\mathsf{nms}_{2})\wedge
  (\mathsf{fss}_{1}\equiv\mathsf{fss}_{2})\wedge\mathsf{(mss}_{1}\equiv\mathsf{mss}_{2})$
%\end{eqnarray*}
where $\equiv$ is an equivalence relation on sequences that ignores
the order (and repetitions) of elements of a sequence. For two field signatures \textsf{$\mathsf{fs_{1}}=(\mathsf{a_{1}},\mathsf{nm_{1}})$}
and $\mathsf{fs_{2}}=(\mathsf{a_{2}},\mathsf{nm_{2}})$, \textsf{
$\mathsf{fs_{1}}=\mathsf{fs_{2}}\Leftrightarrow(\mathsf{a_{1}}=\mathsf{a_{2}})\wedge(\mathsf{\mathsf{nm_{1}}=nm_{2}}).$
} Similarly, for two method signatures \textsf{$\mathsf{ms_{1}}=(\mathsf{b_{1}},\mathsf{nms_{1}},\mathsf{nm_{1}})$}
and $\mathsf{ms_{2}}=(\mathsf{b_{2}},\mathsf{nms_{2}},\mathsf{nm_{2}})$,
\textsf{$\mathsf{ms_{1}}=\mathsf{ms_{2}}\Leftrightarrow(\mathsf{b_{1}}=\mathsf{b_{2}})\wedge(\mathsf{nms_{1}}=\mathsf{nms{}_{2}})\wedge(\mathsf{\mathsf{nm_{1}}=nm_{2}})$} (Here, sequence equality, not sequence equivalence, is used. For
method parameter signature names, order and repetitions do matter.)

Two signature environments are equal if and only if they are equal
as sets. Two signature closures are equal if and only if they are
equal as pairs. Equal signature closures have the same root class
signature name and equal signature environments.

\textbf{}Finally, a relation between signature environments that
is needed when we discuss inheritance is the extension relation on
signature environments. A signature environment $se_{2}$ extends
a signature environment $se_{1}$ (written $se_{2}\ext se_{1}$) if
$se_{2}$ binds the names defined in $se_{1}$ to exactly the same
class signatures as $se_{1}$ does. Viewed as sets, $se_{2}$ is a
superset of $se_{1}$. Thus,
\[
se_{2}\ext se_{1}\Leftrightarrow se_{2}\supseteq se_{1}.
\]

\subsection{\label{sub:Inheritance-and-Subsigning}Subsigning and Inheritance }

The supersignatures component of class signatures defines an ordering
relation between signature closures. We call this relation between
signature closures \emph{subsigning}. The subsigning relation between
class signature closures models the inheritance relation between classes
in class-based OOP.

A signature closure $sc_{2}=(nm_{2},se_{2})$ is an \emph{immediate
subsignature} ($\subsign_{1}$) of a signature closure $sc_{1}=(nm_{1},se_{1})$
if the signature environment (\emph{i.e.}, the second component) of
$sc_{2}$ is an extension\emph{ }($\ext$) of the signature environment
of $sc_{1}$ and the signature name of $sc_{1}$ is a member of the
supersignature names component of the root class signature of $sc_{2}$,
\emph{i.e.},
\[
sc_{2}\subsign_{1}sc_{1}\Leftrightarrow se_{2}\ext se_{1}\wedge(nm_{1}\in super\_sigs(se_{2}(nm_{2}))).
\]

The subsigning relation, $\subsign$, between signature closures is
the reflexive transitive closure of the immediate subsigning relation
($\subsign_{1}$).
To illustrate the definitions given in this section,
Appendix~\ref{sec:Signature-Examples} presents a few examples of
signature constructs, and presents examples of signature closures
that are in the subsigning relation.
%Appendix~C of~\cite{NOOP} presents a few examples of signature 
%closures in the subsigning relation.

The inclusion of class contracts in deciding the subsigning relation
makes the subsigning relation a more accurate reflection of a true
``is-a'' (substitutability) relationship than the structural subtyping
relation used in structurally-typed OOP. This makes subsigning capture
the fact that subtyping in nominally-typed OOP is more semantically
accurate than structural subtyping, as mentioned earlier, and as is explained in more detail in~\cite{AbdelGawad2015}.

\section{\label{sec:NOOP}$\NOOP$: A Model of Nominal OOP}

Using the records domain constructor ($\rec$) presented in Section~\ref{sec:Rec}
and signature constructs presented in Section~\ref{sec:Signatures},
in this section we now present the construction of $\NOOP$ as a more
precise model of nominally-typed mainstream OOP.

The construction of $\NOOP$ proceeds in two steps. First, the
solution of a simple recursive domain equation defines a preliminary
domain $\hat{\dom O}$ of raw objects, where an object in $\hat{\dom O}$
contains (1) a signature closure that encodes nominal information
of nominally-typed OOP, and contains bindings for object members in
two separate records: (2) a record for fields of the object, and (3)
a record for methods of the object.

A simple recursive definition of objects with signature information
does not force signature information embedded in objects to conform
with their member bindings. Accordingly, in the second step of the
construction of $\NOOP$, invalid objects in the constructed preliminary
domain of objects $\hat{\dom O}$ are ``filtered out'' producing
a domain $\dom O$ of proper objects that model nominal objects of
mainstream OO software. Invalid objects are ones where the signature
information is inconsistent with member bindings in the member records.
The filtering of the preliminary domain is done by defining a projection
function on the preliminary domain $\hat{\dom O}$.

We call the model having the preliminary domain defined by the domain
equation `$pre\NOOP$'. Our target model, $\NOOP$, is the one containing
the image domain resulting from applying the filtering function on
the preliminary domain $\hat{\dom O}$ of $pre\NOOP$.

\subsection{Construction of $\NOOP$}

\label{sec:NOOP-Domain-Equation}The domain equation defining $pre\NOOP$,
and thence $\NOOP$, uses two flat domains $\dom L$ and $\mathcal{S}$.
Domain $\dom L$ is the flat domain of labels, and domain $\mathcal{S}$
is the flat domain of signature closures (Section~\ref{sec:Signatures}).

The domain equation that describes $pre\NOOP$ is
\begin{equation}
\hat{\dom O}=\mathcal{S}\times(\mathcal{L}\multimap\hat{\mathcal{O}})\times(\mathcal{L}\multimap(\hat{\mathcal{O}}^{*}\strfunarr\hat{\mathcal{O}}))\label{eq:NOOP-Domain-Equation}
\end{equation}
where the main domain defined by the equation, $\hat{\dom O}$, is
the domain of raw objects, $\times$ is the strict product domain
constructor, and $\rec$ is the records domain constructor (Section~\ref{sec:Rec}).
Equation~(\ref{eq:NOOP-Domain-Equation}) states that every raw object
(\emph{i.e.}, every element in $\hat{\dom O}$) is a triple of:
\begin{enumerate}
\item A signature closure (\emph{i.e.}, a member of $\mathcal{S}$),
\item A fields record (\emph{i.e.}, a member of $\mathcal{L}\multimap\hat{\dom O}$),
and
\item A methods record (\emph{i.e.},\emph{ }a member of $\mathcal{L}\multimap(\hat{\dom O}^{*}\strfunarr\hat{\dom O}),$
where $\strfunarr$ is the strict continuous functions domain constructor,
and $^{*}$ is the finite-sequences domain constructor.)
\end{enumerate}
Domain $\hat{\dom O}$ of $pre\NOOP$ is the solution of Equation~(\ref{eq:NOOP-Domain-Equation}).
Applying the iterative least-fixed point (LFP) construction method
from domain theory~\cite{DomTheoryIntro},
the construction of $\hat{\dom O}$
proceeds in iterations, driven by the structure\emph{ }of the right-hand
side (RHS) of Equation~(\ref{eq:NOOP-Domain-Equation}). The RHS
of the equation is viewed as a continuous function over domains (given
the continuity of all used domain constructors, and that constructor
composition preserves continuity.) Details of the iterative construction
of $pre\NOOP$ are presented in~\cite{NOOP}.

\label{sub:Filtering-and-NOOP}The second step in constructing $\NOOP$ is the
definition of a projection/fi{}ltering function, \code{filter}, to
map domain $\hat{\dom O}$ of $pre\NOOP$ to the $\NOOP$ domain $\dom O$
of valid objects modeling objects of nominally-typed OOP. For this,
first, we define an object in $\hat{\dom O}$ to be valid as follows.

\begin{definition}
\label{Def:valid-obj}An object $o$ in $\hat{\dom O}$ is \emph{valid}
if it is the bottom object $\bot_{\dom O}$, or if it is a non-bottom
object $o=(sc,fr,mr)$ such that
\begin{itemize}
\item The fields shape and the methods shape of $sc$ are exactly the same
as (\emph{i.e.}, equal to) the shape of $fr$ and the shape of $mr$, respectively,
\item Non-bottom valid objects bound to field names in $fr$ have signature
closures that subsign the signature closures for corresponding fields
in $sc$, and
\item Non-bottom functions bound to method names in $mr$ conform to corresponding
method signatures in $sc$, where by \emph{conformance} the functions are required to
\begin{itemize}
\item take in sequences of valid objects whose embedded signature closures
subsign (component-wise) the corresponding sequences of method parameter
signature closures in $sc$, prepended with $sc$ itself (for the
implicit parameter \code{self/this}), and
\item return valid objects
with signature closures that subsign the corresponding return value
signature closures specified in the method signatures in $sc$.
\end{itemize}
\end{itemize}
\end{definition}
As a direct translation of Definition~\ref{Def:valid-obj}, the function
\code{filter} mapping $\hat{\dom O}$ into $\hat{\dom O}$
($\dom O$ is a proper subdomain of $\hat{\dom O}$) is defined using
the following three recursive function definitions, presented using \emph{lazy}
functional language pseudo-code.%
%\begin{framed}%
\begin{lyxcode}
\textbf{fun}~filter(o:$\hat{\dom O}$):$\dom O$

\textbf{  ~match}~o~\textbf{with}~((nm,se),~fr,~mr)

\textbf{  ~if} (sf-shp(se(nm))~!=~rec-shp(fr))~$\vee$

~~~~~(sm-shp(se(nm))~!=~rec-shp(mr))

\textbf{     ~~~~return}~$\bot_{\dom O}$~//~non-matching~shapes

\textbf{  ~else}~//~lazily~construct~closest~valid object~to~o

\textbf{    ~~~~match}~se(nm),~fr,~mr~\textbf{with}

     ~~~~~~ (\_,~\_,~{[}($a_{i}$,~$snm_{i}$)~|~i=1,$\cdots$,m~{]},

             ~~~~~~~~~~~~~~{[}($b_{j}$,~$mi\_snm_{j}$,~$mo\_snm_{j}$) |~j=1,$\cdots$,n{]}),

      ~~~~~~~(fr-tag,~\{$a_{i}$~$\mapsto$~$o_{i}$~|~i=1,$\cdots$,m\}),\\
      ~~~~~~~(mr-tag,~\{$b_{j}$~$\mapsto$~$m_{j}$~|~j=1,$\cdots$,n\})

\textbf{ ~~~~let}~si~=~se\_clos(se,~$snm_{i}$)

      \textbf{ ~~~~let}~misj~=~map(se\_clos(se), {[}nm::$mi\_snm_{j}${]})

       ~~~~~~~~~//~nm~is~prepended~to~$mi\_snm_{j}$~to~ handle~`this'

      \textbf{ ~~~~let}~mosj~=~se\_clos(se,~$mo\_snm_{j}$)

        \textbf{~~~~~return}~((nm,se),

          ~~~~~~~(fr-tag,~\{$a_{i}$~$\mapsto$~filter-obj-sig(si,$o_{i}$)~|~i=1,$\cdots$,m\}),

          ~~~~~~~(mr-tag,~\{$b_{j}$~$\mapsto$~filter-meth-sig(misj,~mosj,~$m_{j}$)

          ~~~~~~~~~~~~~~~~|~j=1,$\cdots$,n\}))~\\
~\\

\textbf{fun}~filter-obj-sig(ss:$\dom S$,~o:$\hat{\dom O}$):$\dom O$

  \textbf{ ~match}~o~\textbf{with}~(s,~\_,~\_)

    \textbf{ ~if} (s~$\trianglelefteq$~ss)

      \textbf{ ~~~~return}~filter(o)~//~closest~valid~object~to~o

   \textbf{ ~else}

      \textbf{ ~~~~return}~$\bot_{\dom O}$~//~no~subsigning~\\
~\\

\textbf{fun}~filter-meth-sig(in\_s:$\dom S^{+}$, out\_s:$\dom S$, m:$\hat{\dom M}$):$\dom M$

\textbf{  ~return} ($\lambda os.$\emph{let~}vos~=~map2(filter-obj-sig, in\_s,~$os$)

\emph{ ~~~~~~~~~~~~~in}~filter-obj-sig(out\_s, m(vos)))
\end{lyxcode}
%\end{framed}
In the definition of \code{filter}, functions \code{sf-shp} and
\code{sm-shp} compute field and method shapes of signatures, while
function \code{rec-shp} computes shapes of records.\textbf{ }Function
\code{se\_clos(se,nm)} computes a signature closure corresponding
to signature name \code{nm} whose first component is \code{nm} and
whose second component is the minimal subset of signature environment
\code{se} that makes \code{se\_clos(se,nm)} a signature closure.
To handle \code{this/self} a ``curried'' version of \code{se\_clos}
is passed to the \code{map} function. Additionally, domain $\dom S^{+}$
is the domain of non-empty sequences of signature closures (non-empty
because methods are always passed in the object \code{this/self}),
and domains $\hat{\dom M}$ and $\dom M$ are auxiliary domains of
raw methods and methods, respectively. The function \texttt{\code{\texttt{map2}}
}is the two-dimensional version of \code{map} (\emph{i.e.}, takes
a binary function and two input lists as its arguments.)

In words, the definition of the filtering function \code{filter}
states that the function takes an object $o$ of $\hat{\dom O}$ and
returns a corresponding valid object of $\dom O$. If the object is
invalid because of non-matching shapes in the signature closure of
$o$ and its member records, \code{filter} returns the bottom object
$\bot_{\dom O}$ (in domain $\hat{\dom O}$, $\bot_{\dom O}$ is the
closest valid object to an invalid object with non-equal shapes in
its signature and records.) Otherwise, $o$ has matching signature
and record shapes but may have objects bound to its fields, or taken
in or returned by its methods, whose signature closure does not subsign
the corresponding signature closures in the signature closure of $o$.
In this case, \code{filter} lazily constructs and returns the closest
valid object in domain $\hat{\dom O}$ to $o$, where all non-bottom
fields and non-bottom methods of $o$ are guaranteed (via functions
\code{filter-obj-sig} and \code{filter-meth-sig}, respectively)
to have signature closures that subsign the corresponding signature
closures in the signature closure of $o$.

Function \code{filter-obj-sig} checks if its input object \code{o}
has a signature closure \code{s} that subsigns a required declared
signature closure \code{ss}. If \code{s} is not a subsignature of
\code{ss}, \code{filter-obj-sig} returns $\bot_{\dom O}$. If it
is, the function calls \code{filter} on \code{o}, thereby returning
the closest valid object to \code{o}.

For methods, when \code{filter-meth-sig} is applied to a method \code{m}
it returns a valid method that when applied to the same input $os\in\hat{\dom O}^{+}$
as \code{m}, returns the closest valid object to the output object
of \code{m} that subsigns the declared output signature closure \code{out\_s}
corresponding to the sequence of valid objects closest (component-wise)
to $os$ that (again, component-wise) subsigns the declared sequence
of input signature closures \code{in\_s} prepended with the signature
closure of the object enclosing \code{m} (to properly filter the
first argument object in $os$, which is the value for \code{this/self}.)

Having defined the filtering function \code{filter}, the proof that domain $\dom O$, as defined by \code{filter}, is
a well-defined computable subdomain of $\hat{\dom O}$ is presented in Appendix~\ref{sec:Proofs}.

\subsection{Class Types}

As constructed, $\NOOP$ is a nominal model of OOP, because objects
of domain $\dom O$ of $\NOOP$ include signatures specifying the
associated class contracts maintained by the objects (including inherited
contracts.) This nominal information encoded in signatures provides
a framework for naturally partitioning the domain of $\NOOP$ objects
into sets defining class types, where a type is a set of similar objects.

First, we define exact class types. The \emph{exact class type} corresponding
to a class \code{C} is the set of all objects tagged with the signature
closure for \code{C}.%
\footnote{In Java, for example, objects in the exact type for a class \noun{\code{\noun{C}}}
are precisely those for which the \code{getClass()} method returns
the class object for \code{C}.%
} Next, it should be noted that a cardinal principle of nominally-typed
mainstream OOP is that \emph{objects from subclasses of a class \code{\emph{C}}
conform to the contract of class \code{\emph{C}}} and can be used
in place of objects constructed using class \code{C} (\emph{i.e.},
in place of objects in the exact class type of \code{C}.) Hence,
the natural type associated with class \code{C}, called the \emph{class
type} corresponding to or designated by \code{C}, consists of the
objects in class \code{C} plus the objects in \emph{all} subclasses
of class \code{C}. In nominally-typed OO languages, the class type
designated by class \code{C} is not the exact class type for \code{C}
but the union of all exact types corresponding to classes that subclass
(\emph{i.e.}, inherit from) class \code{C}, including class \code{C}
itself.

Motivated by this discussion, we define class types in $\NOOP$
as interpretations of signature closures. For a signature closure
$sc$, its interpretation $\mathbb{S}[sc]$ is a \emph{subdomain} of domain
$\dom O$, having the same underlying approximation ordering of domain
$\dom O$ and whose universe is defined by the equation
\begin{equation}
|\mathbb{S}[sc]|=\{(scs,fr,mr)\in\dom O|scs\subsign sc\}\cup\{\bot_{\dom O}\}.\label{eq:sig-semantics}
\end{equation}
In other words, the class type designated by a class is the interpretation
of the signature closure $sc$ corresponding to the class, which,
in turn, is the set of all objects in domain $\dom O$ of $\NOOP$
with a signature closure $scs$ that subsigns $sc$, or the bottom
object $\bot_{O}$. Given that subsigning in $\NOOP$ models OO inheritance,
the definition of $\NOOP$ class types is in full agreement with intuitions
of mainstream OO developers.

Having defined class types, it should be noted that a class type $\mathbb{S}[sc]$ is always a non-empty domain (\emph{i.e.}, always
has some non-bottom object) because the object $$(sc,\{a_{1}\mapsto\bot_{\dom O},\cdots,a_{m}\mapsto\bot_{\dom O}\},\{b_{1}\mapsto\bot_{\dom M},\cdots,b_{n}\mapsto\bot_{\dom M}\})$$
(where $\{a_{1},\cdots,a_{m}\}$ is the fields shape of $sc$ and
$\{b_{1},\cdots,b_{n}\}$ is the methods shape of $sc$) is always
a valid constructed object (\emph{i.e.}, is an object of domain $\hat{\dom O}$
of $pre\NOOP$ that passes filtering to domain $\dom O$ of $\NOOP$.)
This object is a member of $\mathbb{S}[sc]$ by Equation~(\ref{eq:sig-semantics}). The non-emptiness of class types is used in the proof of the identification of inheritance and subtyping.%

\subsection{Inheritance \emph{is} Subtyping}

After we constructed $\NOOP$, and after we defined class types in agreement
with intuitions of mainstream OO developers, we can now easily see what
it means for nominally-typed OO type systems to completely
identify inheritance and subtyping. We express this statement formally
as follows: Two signature closures corresponding to two classes are
in the subsigning relation if and only if\emph{ }the class types denoted
by the two signature closures are in the subset relation (\emph{i.e.},
the two classes are in the inheritance relation if and only if the
corresponding class types are in the nominal subtyping relation.)
We prove the correspondence between inheritance and subtyping in the
following theorem.
\begin{theorem}
\label{Theorem:subsigning=00003Dsubtyping}For two signature closures
$sc_{1}$ and $sc_{2}$ denoting class types $\mathbb{S}[sc_{1}]$
and $\mathbb{S}[sc_{2}]$, we have 
\begin{equation}
sc_{1}\subsign sc_{2}\Leftrightarrow\mathbb{S}[sc_{1}]\subseteq\mathbb{S}[sc_{2}]\label{eq:inheritance=00003Dsubtyping}
\end{equation}
\end{theorem}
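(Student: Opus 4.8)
The plan is to prove the biconditional by establishing the two implications separately, exploiting that $\subsign$ is by definition the reflexive transitive closure of the immediate subsigning relation $\subsign_{1}$, and that by Equation~\eqref{eq:sig-semantics} the universe of a class type collects exactly those (non-bottom) objects whose embedded signature closure subsigns the defining closure, together with $\bot_{\dom O}$.

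First I would dispatch the forward direction ($\Rightarrow$), which I expect to be routine. Assuming $sc_{1}\subsign sc_{2}$, take any non-bottom $o=(scs,fr,mr)\in\mathbb{S}[sc_{1}]$. By Equation~\eqref{eq:sig-semantics} we have $scs\subsign sc_{1}$, and since $\subsign$ is transitive (being a reflexive transitive closure), $scs\subsign sc_{2}$, so $o\in\mathbb{S}[sc_{2}]$. As $\bot_{\dom O}$ lies in every class type, we conclude $\mathbb{S}[sc_{1}]\subseteq\mathbb{S}[sc_{2}]$. This step uses nothing beyond transitivity of subsigning and the definition of the interpretation.

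The converse ($\Leftarrow$) is where the real content lies, and here I would lean on the non-emptiness witness highlighted immediately before the theorem. Given $\mathbb{S}[sc_{1}]\subseteq\mathbb{S}[sc_{2}]$, I would exhibit an object whose signature closure is \emph{exactly} $sc_{1}$, namely the canonical object $o_{1}=(sc_{1},\{a_{1}\mapsto\bot_{\dom O},\cdots,a_{m}\mapsto\bot_{\dom O}\},\{b_{1}\mapsto\bot_{\dom M},\cdots,b_{n}\mapsto\bot_{\dom M}\})$ with every field bound to $\bot_{\dom O}$ and every method to $\bot_{\dom M}$. This object passes filtering into $\dom O$, and since $sc_{1}\subsign sc_{1}$ by reflexivity, $o_{1}\in\mathbb{S}[sc_{1}]$. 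The assumed inclusion then forces $o_{1}\in\mathbb{S}[sc_{2}]$, and reading off Equation~\eqref{eq:sig-semantics} for a non-bottom element carrying signature closure $sc_{1}$ yields $sc_{1}\subsign sc_{2}$, as required.

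The main obstacle, and the only place the argument could break, is ensuring in the backward direction that the witness genuinely carries signature closure $sc_{1}$ \emph{and} genuinely survives into $\dom O$: if class types could be empty, or if every object tagged with $sc_{1}$ were filtered away, the inclusion of universes would transmit no information about the closures themselves. This is precisely why the non-emptiness observation is load-bearing. I would therefore justify that $o_{1}$ meets Definition~\ref{Def:valid-obj}: its fields and methods shapes coincide with those of $sc_{1}$ by construction, and the subsigning constraints on members are satisfied \emph{vacuously} because all member bindings are bottom (the conditions in Definition~\ref{Def:valid-obj} quantify only over non-bottom fields and non-bottom functions). Hence $o_{1}$ passes filtering and serves as the needed witness, and everything else reduces to the transitivity and reflexivity of $\subsign$.
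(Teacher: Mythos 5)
Your proposal is correct and follows essentially the same route as the paper's own proof: the forward direction via transitivity of $\subsign$ applied to Equation~(\ref{eq:sig-semantics}), and the backward direction via the all-bottom-members witness object tagged with $sc_{1}$ that establishes non-emptiness of $\mathbb{S}[sc_{1}]$. Your added justification that this witness satisfies Definition~\ref{Def:valid-obj} (shapes match, member conditions hold vacuously) merely spells out what the paper asserts in the paragraph preceding the theorem, so there is no substantive difference.
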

\begin{proof}
Based on Equation~(\ref{eq:sig-semantics}), and the non-emptiness
of class types%
%\footnote{  %moved to section above defining class types
%}
, the proof of this theorem is simple.

\textbf{Case:} The $\Rightarrow$ (only if) direction:

If $sc_{1}\subsign sc_{2}$, by applying the definition of $\mathbb{S}[sc_{2}]$
(\emph{i.e.}, Equation~(\ref{eq:sig-semantics})) all elements of $\mathbb{S}[sc_{1}]$
belong to $\mathbb{S}[sc_{2}]$ (the variable $scs$ in Equation~(\ref{eq:sig-semantics})
is instantiated to $sc_{1}$, and $\bot_{\dom O}$ is a common member
in all class types.) Thus, $\mathbb{S}[sc_{1}]\subseteq\mathbb{S}[sc_{2}]$.

\textbf{Case:} The $\Leftarrow$ (if) direction:

By the non-emptiness of $\mathbb{S}[sc_{1}]$ there exists a non-bottom
object $o$ of $\mathbb{S}[sc_{1}]$ with signature closure $sc_{1}$.
If $\mathbb{S}[sc_{1}]\subseteq\mathbb{S}[sc_{2}]$, then $o\in\mathbb{S}[sc_{2}]$.
By Equation~(\ref{eq:sig-semantics}) all non-bottom members of $\mathbb{S}[sc_{2}]$
must have a signature closure that subsigns $sc_{2}$. When applied
to $o$ we thus have $sc_{1}\subsign sc_{2}$.
\end{proof}
We should notice in the proof above that it is the nominality of objects
of $\NOOP$ (\emph{i.e.}, the embedding of signature closures into
objects) that makes $\mathbb{S}[sc_{2}]$ being a superset of $\mathbb{S}[sc_{1}]$
imply that $sc_{1}$ has $sc_{2}$ as one of its supersignatures,
and vice versa. The simplicity of the proof is a clear indication
of the naturalness of the definitions for class signatures and class
types.

\section{\label{sec:NOOPvsSOOP}$\NOOP$ Compared to Structural Models of OOP}
%Summary of NOOPvsSOOP paper
Having presented $\NOOP$, in this section we briefly compare $\NOOP$ to the most well-known structural domain-theoretic
models of OOP, namely the model of Cardelli, which we call $\SOOP$, and that of Cook, which we call $\mu\SOOP$. 

Comparing $\NOOP$ to $\SOOP$ and $\mu\SOOP$ reveals that $\NOOP$ includes full class name information
while $\SOOP$ and $\mu\SOOP$ totally ignore this information, based on the different views of type
names adopted by each of the models.  Objects in $\SOOP$ and $\mu\SOOP$ are viewed as mere (plain) records, while in $\NOOP$
they are viewed as records that maintain contracts, which are referred to via nominal information, with nominal information being part of the identity of $\NOOP$ objects.

$\NOOP$, $\SOOP$ and $\mu\SOOP$ also have different
views of types, type inheritance and subtyping, where behavioral contracts (via type name information) are
part of the identity of types in $\NOOP$, and thus are respected in type inheritance and subtyping, but
contracts are ignored in $\SOOP$ and $\mu\SOOP$. In addition, $\NOOP$ and $\mu\SOOP$ model recursive types,
while $\SOOP$ does not. This leads $\NOOP$ (due to nominality) and $\SOOP$ (due to lack of recursive types)
to identify type inheritance with OO subtyping while $\mu\SOOP$ breaks that identification.

More details on the differences and similarities between $\NOOP$, $\SOOP$ and $\mu\SOOP$ can be found in~\cite{AbdelGawad2016}.

\section{\label{sec:Conclusions}Conclusions}

Based on realizing the semantic value of nominal-typing, in this paper we presented $\NOOP$ as a model
of OOP that includes nominal information found in nominally-typed mainstream OO software.
%The inclusion of nominal information in $\NOOP$
%and defining inheritance as a nominal notion that respects class contracts
The inclusion of nominal information as part of the identity of objects and class types in $\NOOP$
led us to readily prove that type inheritance, at the syntactic level, and subtyping, at the semantic
level, completely agree in nominally-typed OOP. A comparison of $\NOOP$ to structural models of OOP
revealed nominal and structural models of OOP have different views on fundamental notions of OOP.
It is necessary, we thus believe, to include nominal information
in any accurate model of nominally-typed mainstream OOP. By its inclusion
of nominal information, $\NOOP$ offers a chance to understand and
advance OOP and current OO languages based on a firmer semantic
foundation.

\section{\label{sec:Future-Work}Future Work}

%Like most models of programming languages, and according to standard
%practice in PL research, $\NOOP$ does not model mutation. A possible
%future work that can be built on top of research presented in this
%paper is to develop a model of nominally-typed OOP that incorporates
%mutable fields.
%The possibilities for research that can be built on top of research presented in
%this paper are many.
One immediate possible future work that can be built on top of research presented in
this paper is to define a minimal nominally-typed OO language, \emph{e.g.}, in the spirit of \noun{FJ~\cite{FJ/FGJ}},
then, in a standard straightforward manner, give the denotational semantics
of program constructs of this language in $\NOOP$. The type safety of this 
language can then be proven using the given denotational semantics.

Generics add to the expressiveness of type systems of nominally-typed OO programming languages~\cite{Bank96,Bracha98,Corky98,GenericsFAQWebsite,bloch08,JLS14,CSharp2015,ScalaWebsite}. Another possible future work that can be built on top of $\NOOP$ is to produce
a denotational model of generic nominally-typed OOP. Such a model 
may provide a chance
for a better analysis of features of generics in nominally-typed mainstream
OO languages and thus provide a chance for suggesting improvements and
extensions to the type systems of these languages.

\section*{Acknowledgments}
The authors are thankful
%to Professor Robert Cartwright for the discussions we had and the guidance he %gave that helped in developing $\NOOP$ and in reaching some of the conclusions in this %paper, and
to Benjamin Pierce for the feedback he offered on motivating and presenting $\NOOP$.
%The author would like to much thank Professor Robert ``Corky'' Cartwright
%for the discussions we had and the guidance he gave that helped in
%developing $\NOOP$ and in reaching some of the conclusions in this
%paper, as well as to thank Professor Benjamin Pierce for the feedback
%he offered on motivating and presenting $\NOOP$.
%
%The author would also like to dedicate this work to his late beloved mother.
%She has been a very motivating soul and was a major supporter of him during the
%development of research presented in this paper. ~Mum, I miss you much.
%\end{ack}

\appendix

\section{\label{sec:Signature-Examples}Class Signature Examples}

To illustrate the definitions of signature constructs given in Section~\ref{sec:Signatures},
in this appendix we present a few examples of signature constructs.
Assuming the following OO class definitions (in Java-like pseudo-code),\textbf{}
\begin{lstlisting}[basicstyle={\ttfamily},language=Java]
  class Object {
    Boolean equals(Object o){ ... }
  }

  class Boolean extends Object {
    Boolean equals(Object b){ ... }
    ... // other members of class Boolean
  }

  class Pair extends Object {
    Object first, second;
    Boolean equals(Object p){ ... }
    Pair swap(){ return new Pair(second, first); }
  }
\end{lstlisting}
we define the corresponding class signatures\medskip{}

$ObjSig=$\code{(Object, {[}{]}, {[}{]}, {[}(equals, {[}Object{]}, Boolean){]})},\medskip{}

$BoolSig=$\code{(Boolean, {[}Object{]}, ...)}, and\medskip{}

$PairSig=$\code{(Pair, {[}Object{]}, {[}(first, Object), (second, Object){]},\\
\hphantom{}~~~~~~~~~~~~{[}(equals, {[}Object{]}, Boolean), (swap, {[}{]}, Pair){]})}\medskip{}
\\
and, hence, define signature environments $ObjSigEnv=$ \code{\{$ObjSig$, $BoolSig$\}},
and\\ $PairSigEnv=$ \code{\{$ObjSig$, $BoolSig$, $PairSig$\}}, and
the signature closures $ObjSigClos=$ \code{(Object, $ObjSigEnv$)},
and $PairSigClos$ = \code{(Pair, $PairSigEnv$)}.\medskip{}

We can immediately see, using the definition of extension and the
definitions of immediate subsigning and subsigning in Section~\ref{sec:Signatures},
that $PairSigEnv\ext ObjSigEnv$, $PairSigClos\subsign_{1}ObjSigClos$,
and $PairSigClos\subsign ObjSigClos$. The last conclusion expresses
the fact that class \code{Pair} inherits from class \code{Object},
and the second to last conclusion expresses that class \code{Pair}
is an immediate subclass of class \code{Object} (The reader is encouraged
to find other similar conclusions based on the definitions of classes
\code{Object}, \code{Boolean} and \code{Pair} given above.)

\section{\label{sec:Proofs}Proofs}

In this appendix we present proofs of main theorems in this paper,
pertaining to the properties of the records domain constructor $\multimap$,
and to the filtering of $pre\NOOP$ to $\NOOP$. These proofs ascertain
the well-definedness of $\multimap$ and of the filtering, and thus
their appropriateness for being used in constructing $\NOOP$.

\subsection{\label{sec:Rec-Effective-Presentation}The Domain of Record Functions
has an Effective Presentation}

It is straightforward to confirm that $\multimap$ constructs a domain.
To prove that $\multimap$ constructs domains given an arbitrary domain
$\dom D$ and a domain $\dom L$ (with a fixed interpretation as a
flat domain of labels), we build an effective presentation of the
finite elements of $\dom L\multimap\dom D$, assuming an effective
presentation of the finite elements of $\dom D$ and $\dom L$. We
prove that these finite elements form a finitary basis of the records
domain. Since $\dom L$ has a fixed interpretation, domain constructor $\multimap$
can be considered as being parametrized only by domain $\dom D$.

Given an effective presentation $L$ of $\dom L$ where $L=\left[\bot_{\dom L},l_{1},l_{2},\cdots\right]$,
we define, for all $n\in\mathbb{N}$, the finite sequences 
\[
L_{n}=[l_{j_{1}},\cdots,l_{j_{k}}]
\]
where $0<j_{1}<\cdots<j_{k}$, and 
\begin{equation}
2n=\sum_{0<i\leq k}2^{j_{i}}.\label{eq:Ln-j-k}
\end{equation}

The size $k$, of $L_{n}$, is the number of ones in the binary expansion
of $n$, and thus $k\leq log_{2}(n+1)$ with equality only when $n$
is one less than a power of 2. $k=0$ only when $n=0$, and in this
case $L_{0}=[]$ (the empty label sequence)%
\footnote{The definition of $L_{n}$ is patterned after a similar construction
presented in Dana Scott's ``Data Types as Lattices''~\cite{DTAL}.
Unlike the case in Scott's construction, $n$ here, in the LHS
of Equation~(\ref{eq:Ln-j-k}), is doubled---\emph{i.e.}, the binary expansion
of $n$ is ``shifted left'' by one position---to guarantee $j_{i}>0$,
and thus guarantee that $l_{0}=\bot_{\dom L}$ is never an element
of $L_{n}$.%
}. It is easy to confirm that there is a one-to-one correspondence
between the set of natural numbers $\mathbb{N}$ and the set of distinct
finite label sequences $L_{n}$.

Given an effective presentation of the finite elements of $\dom D$,
$D=\left[\bot_{\dom D},d_{1},d_{2},\cdots\right],$ an effective presentation
of the finite elements of $\dom D^{k}$, the domain of (non-strict)
sequences of length $k$ ($k\geq0$) of elements of $\dom D$, is
\[
(\dom D^{k})_{\pi^{k}(n_{1},n_{2},\cdots,n_{k})}=\left[d_{n_{1}},\cdots,d_{n_{k}}\right]
\]
 where, for $k>2$, 
\[
\pi^{k}(n_{1},n_{2},\cdots,n_{k})=\pi(\pi^{k-1}(n_{1},\cdots,n_{k-1}),n_{k})
\]
$\pi{}^{k}(\cdot)$ is the one-to-one $k$-tupling function (also
called the Cantor tupling function), and 
\[
\pi(p,q)=\frac{1}{2}(p+q)(p+q+1)+q=\pi^{2}(p,q)
\]
is the one-to-one Cantor pairing function.

Now, let 
\[
f(n,m)=\{(\bot_{\dom L},\bot_{\dom D})\}\cup zip(L_{n},(\dom D^{k})_{m})
\]
where, again, $k$ is the number of ones in the binary expansion of
$n$, and 
\[
zip([l_{j_{1}},\cdots,l_{j_{k}}],[d_{n_{1}},\cdots,d_{n_{k}}])=\{(l_{j_{1}},d_{n_{1}}),\cdots,(l_{j_{k}},d_{n_{k}})\}.
\]

The sequence $\mathsf{R}=\left[r_{0},r_{1},\cdots\right]$ of the
finite elements of $\dom R$ can then be presented as $r_{0}=\bot_{\dom R}$,
and for $n,m\geq0$, 
\[
r_{\pi(n,m)+1}=(tag(L_{n}),f(n,m)).
\]

Given the decidability of the consistency ($\cdot\uparrow_{\dom D}\cdot$)
and lub ($\cdot\sqcup_{\dom D}\cdot=\cdot$) relations for finite
elements of $\dom D$, the presentation $\mathsf{R}$ of the finite
elements of $\dom R$ is effective, since, for record functions $r$
and $r'$as defined in Section~\ref{sec:Definition-of-Records},
under the approximation ordering defined by Equation~\ref{eq:rec-approx},
the consistency relation
\begin{equation}
r\uparrow_{\dom R}r'\Leftrightarrow\forall_{i\leq k}(d_{i}\uparrow_{\dom D}d'_{i})\label{eq:R-consistency}
\end{equation}
is decidable (given the finiteness of records), and the lub relation
\begin{equation}
r\sqcup_{\dom R}r'=(tag(\{l_{1},\cdots,l_{k}\}),\{(\bot_{\dom L},\bot_{\dom D}),(l_{1},d_{1}\sqcup_{\dom D}d'_{1}),\cdots,(l_{k},d_{k}\sqcup_{\dom D}d'_{k})\})\label{eq:R-lub}
\end{equation}
is recursive (handling $r=\bot_{\dom R}$ or $r'=\bot_{\dom R}$ in
the definitions of $\uparrow_{\dom R}$ and $\sqcup_{\dom R}$ is
obvious. All record functions are consistent with $\bot_{\dom R}$,
and the lub of a record function $r$ and $\bot_{\dom R}$ is $r$.)
\begin{lemma}[$\multimap$ constructs domains]
\label{lem:Rec-Fin-Bas}Under $\sqsubseteq_{\dom R}$, elements of
$\mathsf{R}$ form a finitary basis of $\dom R$.\end{lemma}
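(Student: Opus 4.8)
The plan is to verify the two defining conditions of a finitary basis for the enumerated poset $(\mathsf{R},\sqsubseteq_{\dom R})$: that it possesses a least element, and that every finite consistent (bounded) subset has a least upper bound that again lies in $\mathsf{R}$. Countability and the presence of a bottom are immediate, since the enumeration $\mathsf{R}=[r_0,r_1,\cdots]$ exhibits $\mathsf{R}$ as a countable set and $r_0=\bot_{\dom R}$ is by construction below every element under $\sqsubseteq_{\dom R}$. The substance of the proof is therefore the closure of $\mathsf{R}$ under least upper bounds of finite consistent subsets.

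First I would pin down exactly which finite subsets are consistent. Using the approximation ordering~(\ref{eq:rec-approx}), a non-bottom record function is comparable only to record functions carrying the same tag, and two non-bottom elements with distinct tags admit no common upper bound, since any such bound would itself have to carry both tags. Consequently a finite subset $M\subseteq\mathsf{R}$ is consistent precisely when, after discarding $\bot_{\dom R}$, all remaining elements share a single tag $tag(\{l_1,\cdots,l_k\})$ and their value components are pairwise consistent in $\dom D$; this is exactly the content of the consistency relation~(\ref{eq:R-consistency}) extended from pairs to finite families.

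Next I would construct the least upper bound of such an $M$ explicitly by generalizing formula~(\ref{eq:R-lub}) from binary to finite joins: the join retains the common tag and assigns to each label $l_i$ the lub in $\dom D$ of the corresponding value components drawn from the members of $M$. Here I invoke the standing assumption that the finite elements of $\dom D$ form a finitary basis with a decidable consistency relation and an effective lub operation; pairwise consistency of the value components at each label then guarantees that these pointwise lubs exist as finite elements of $\dom D$. Assembling them over the finite label set $\{l_1,\cdots,l_k\}$ yields a record function that is again finite and carries the shared tag, hence belongs to $\mathsf{R}$. Verifying that this element is indeed the least upper bound of $M$ reduces, label by label, to the fact that lubs in $\dom D$ are least upper bounds.

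The main obstacle is the careful handling of the tag/shape stratification rather than any delicate analytic estimate. The crucial point---and the place where the record construction departs from an ordinary function domain---is establishing that record functions of different shapes are genuinely inconsistent, so that consistent finite families are confined to a single tag and the componentwise join of~(\ref{eq:R-lub}) is well defined. Once that stratification is in place, closure under finite consistent joins follows directly from the bounded-completeness of the finitary basis of $\dom D$, completing the verification that the elements of $\mathsf{R}$ form a finitary basis of $\dom R$.
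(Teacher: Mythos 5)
Your proposal follows the same route as the paper's proof: countability of $\mathsf{R}$, plus closure under least upper bounds of consistent families, established via the explicit componentwise formula of Equation~(\ref{eq:R-lub}) and a reduction to the existence of lubs of consistent finite elements in $\dom D$. The paper's version is terser: it only verifies that consistent \emph{pairs} in $\mathsf{R}$ have lubs (which suffices, since closure under binary consistent joins gives closure under finite consistent joins by induction), and it leaves the tag stratification implicit in the fact that Equations~(\ref{eq:R-consistency}) and~(\ref{eq:R-lub}) are stated only for records over the same $\dom L_{f}$. Your making that stratification explicit is added rigor, not a different method.

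One step, however, is justified incorrectly as stated. You characterize consistency of a finite $M\subseteq\mathsf{R}$ as ``common tag \emph{and} pairwise consistent value components,'' and you then derive the existence of the pointwise lubs in $\dom D$ from \emph{pairwise} consistency. In a finitary basis (a bounded-complete setting), pairwise consistency of a finite set does not imply that the set is bounded, hence does not imply that the needed lubs exist: consider a basis with bottom, elements $a,b,c$, and elements $ab,ac,bc$ bounding the three pairs but with no common upper bound of all three; every pair is consistent, yet $\{a,b,c\}$ has no upper bound and no lub. So your ``precisely when'' equivalence fails from right to left, and the inference ``pairwise consistent $\Rightarrow$ pointwise lubs exist'' is invalid in general. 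The repair is immediate and stays inside your argument: since $M$ is assumed consistent, it has an upper bound $u$ in $\dom R$; by your own tag argument $u$ carries the common tag, and for each label $l_{i}$ the value components of the members of $M$ all lie below the corresponding component of $u$, so at each label these components form a \emph{bounded} finite subset of the basis of $\dom D$ and therefore have a lub there by the defining property of a finitary basis. With ``pairwise consistent'' replaced by ``jointly bounded'' throughout, your proof is correct and coincides with the paper's.
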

\begin{proof}
Given the countability of $\dom L$ and of the finite elements of
$\dom D$, elements of $\mathsf{R}$ are countable. A consistent pair
of elements $r,r'\in\mathsf{R}$, according to Equation~(\ref{eq:R-consistency}),
has a lub $r\sqcup_{\dom R}r'$ defined by Equation~(\ref{eq:R-lub}).
Given that $\dom D$ is a domain, the lub $d\sqcup_{\dom D}d'$ of
all consistent pairs of finite elements $d$, $d'$ in $\dom D$ exists,
thus the lub $r\sqcup_{\dom R}r'$ also exists.
\end{proof}
Lemma~\ref{lem:Rec-Fin-Bas} actually proves that $\multimap$ is
a computable function that maps a pair of a flat domain and a domain to the
corresponding record domain. The presumption is that no effective
presentation is necessary for the flat domain because distinct indices
for elements of $\dom L$ will simply mean distinct labels $l_{i}$.
If $\dom L$ is a flat countably infinite domain (which implies it
has an effective presentation) and $\dom D$ is an arbitrary domain,
then the lemma asserts that $\dom L\multimap\dom D$ is a domain with an effective presentation
that is constructible from the effective presentations for $\dom L$
and $\dom D$.

\subsection{\label{sub:Rec-Cont}Domain Constructor $\multimap$ is Continuous}
\begin{lemma}[$\multimap$ is monotonic]
\label{lem:Rec-Mono} For domains $\dom D$ and $\dom D'$, and a
flat domain of labels $\dom L$, $\dom D\Subset\dom D'\Rightarrow({\dom L}\multimap\dom D)\Subset({\dom L}\multimap\dom D')$\end{lemma}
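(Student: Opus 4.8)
The plan is to prove the subdomain inclusion at the level of finitary bases, exploiting the fact---established in Lemma~\ref{lem:Rec-Fin-Bas}---that the finite elements $\mathsf{R}$ of $\dom L\multimap\dom D$ form a finitary basis whose consistency and lub structure is computed \emph{componentwise} from that of $\dom D$. Recall that $\dom D\Subset\dom D'$ (Definition~6.2 of~\cite{DomTheoryIntro}) asks that $\dom D$ and $\dom D'$ share the same bottom element, that the finite elements of $\dom D$ are finite elements of $\dom D'$ with $\sqsubseteq_{\dom D}$ being the restriction of $\sqsubseteq_{\dom D'}$, and that this inclusion of finitary bases respects consistency and least upper bounds. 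The goal is to transport each of these conditions through the record-function construction.

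First I would identify the finite elements of $\dom L\multimap\dom D$. By Eqs.~(\ref{eq:rec-univ}) and~(\ref{eq:rec-Rf}), every finite record function has the form $(tag(\{l_{1},\dots,l_{k}\}),\{(\bot_{\dom L},\bot_{\dom D}),(l_{1},d_{1}),\dots,(l_{k},d_{k})\})$ with each $d_{i}$ a finite element of $\dom D$. The crucial observation is that the tag component depends only on the label shape $\{l_{1},\dots,l_{k}\}$ and not at all on the value domain, so the $tag$ assignment used for $\dom L\multimap\dom D$ and for $\dom L\multimap\dom D'$ is literally the same function. Consequently, since $\dom D\Subset\dom D'$ presents each finite $d_{i}$ as a finite element of $\dom D'$ and identifies $\bot_{\dom D}$ with $\bot_{\dom D'}$, the very same pair is a finite element of $\dom L\multimap\dom D'$. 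This yields an inclusion of finitary bases $\mathsf{R}\subseteq\mathsf{R}'$.

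Next I would verify the three subdomain clauses. For the ordering: two record functions with distinct tags are unrelated in both domains, while record functions sharing a tag are ordered by the componentwise rule of Eq.~(\ref{eq:rec-approx}); since $\sqsubseteq_{\dom D}$ is the restriction of $\sqsubseteq_{\dom D'}$, the ordering on $\mathsf{R}$ is exactly the restriction of the ordering on $\mathsf{R}'$. For consistency and lubs I would appeal directly to Eqs.~(\ref{eq:R-consistency}) and~(\ref{eq:R-lub}): consistency of two record functions requires equal tags and then reduces to componentwise consistency in $\dom D$, and the lub is formed componentwise via $d_{i}\sqcup_{\dom D}d_{i}'$. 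Because $\dom D\Subset\dom D'$ makes these componentwise consistency tests and componentwise lubs in $\dom D$ agree with those computed in $\dom D'$, the consistency relation and lub operation on $\mathsf{R}$ coincide with their restrictions from $\mathsf{R}'$; in particular the $\dom L\multimap\dom D'$-lub of two elements of $\mathsf{R}$ lands back in $\mathsf{R}$. These are precisely the clauses required by Definition~6.2, giving $\dom L\multimap\dom D\Subset\dom L\multimap\dom D'$.

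The step I expect to be the main obstacle is the bookkeeping around the tag/shape layer rather than the value layer: one must argue that comparability, consistency, and lubs are never created or destroyed across differing label shapes, i.e. that the flat, shape-indexed tag layer partitions both record domains \emph{identically}, so that all nontrivial structure is confined to a single tag-block and thereby reduces cleanly to the componentwise $\dom D$-versus-$\dom D'$ comparison. Once this ``same-tag'' reduction is made explicit, every remaining verification is an immediate componentwise transfer of the corresponding property from the hypothesis $\dom D\Subset\dom D'$.
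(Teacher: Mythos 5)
Your argument is correct in substance and shares its skeleton with the paper's proof: both rest on the observation that the tag layer depends only on label shapes (so it is literally the same in $\dom L\multimap\dom D$ and $\dom L\multimap\dom D'$), both transfer approximation, consistency, and lubs componentwise from the hypothesis $\dom D\Subset\dom D'$ via Equations~(\ref{eq:rec-approx}),~(\ref{eq:R-consistency}) and~(\ref{eq:R-lub}), and both conclude by checking the clauses of Definition~6.2 of~\cite{DomTheoryIntro}. The one real divergence is the level at which you check those clauses: you work with the finitary bases $\mathsf{R}\subseteq\mathsf{R}'$ of \emph{finite} elements, taking Lemma~\ref{lem:Rec-Fin-Bas} as your starting point, whereas the paper verifies the clauses directly on the full universes of Equation~(\ref{eq:rec-univ}) --- first proving $\left|\dom L\multimap\dom D\right|\subseteq\left|\dom L\multimap\dom D'\right|$ for record functions whose values $d_{i}$ range over \emph{all} of $\left|\dom D\right|$, not just its finite elements, and then establishing Equations~(\ref{eq:rec-mono-approx}),~(\ref{eq:rec-mono-cons}) and~(\ref{eq:rec-mono-lub}) for all such elements. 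Your route sits naturally next to the effective-presentation material and reuses Lemma~\ref{lem:Rec-Fin-Bas}, but as written it establishes a sub-basis relation; to get $\dom L\multimap\dom D\Subset\dom L\multimap\dom D'$ in the form the paper uses it (conditions on the universes of Equation~(\ref{eq:rec-univ})), you must either remark that your componentwise verification nowhere uses finiteness of the $d_{i}$ and hence applies verbatim to arbitrary record functions, or appeal to a basis-to-domain lifting lemma that the paper's direct argument never needs. Since the extension is indeed verbatim, this is a presentational gap rather than a mathematical one: dropping the restriction to finite elements turns your proof into the paper's.
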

\begin{proof}
First, we prove that $\multimap$ is monotonic with respect to the
subset relation on the universe of its input, \emph{i.e.}, that $\left|\dom D\right|\subseteq\left|\dom D'\right|\Rightarrow\left|{\dom L}\multimap\dom D\right|\subseteq\left|{\dom L}\multimap\dom D'\right|$.
Then, given that the approximation ordering on $\dom D$ (as a subdomain
of $\dom D'$) is the restriction of the approximation ordering on
$\dom D'$, we prove that the elements of $\dom L\multimap\dom D$
(as members of ${\dom L}\multimap\dom D'$) form a domain under
the approximation ordering of ${\dom L}\multimap\dom D'$, and
thus that $\dom L\multimap\dom D$ is a subdomain of $\dom L\multimap\dom D'$.

Since $\left|\dom D\right|\subseteq\left|\dom D'\right|$, then $\{d_{1},\cdots,d_{k}\}\subseteq\left|\dom D\right|\Longrightarrow\{d_{1},\cdots,d_{k}\}\subseteq\left|\dom D'\right|$.
For arbitrary $\dom L_{f}$ where $\left|\dom L_{f}\right|=\{\bot_{\dom L},l_{1},\cdots,l_{k}\}$,
we thus have
\[
f=\{(\bot_{\dom L},\bot_{\dom D}),(l_{1},d_{1}),\cdots,(l_{k},d_{k})\}\in\left|\dom L_{f}\strfunarr\dom D\right|\Longrightarrow f\in\left|\dom L_{f}\strfunarr\dom D'\right|.
\]

Thus, $\left|\dom L_{f}\strfunarr\dom D\right|\subseteq\left|\dom L_{f}\strfunarr\dom D'\right|$.
Accordingly, for sets $R(\dom L_{f},\dom D)$ (the elements of $\dom L\multimap\dom D$
with tag $tag(\dom L_{f}\backslash\{\bot_{\dom L}\})$) and $R(\dom L_{f},\dom D')$
(the elements of $\dom L\multimap\dom D'$ with tag $tag(\dom L_{f}\backslash\{\bot_{\dom L}\})$),
as defined in Equation~\ref{eq:rec-Rf} of Section~\ref{sec:Definition-of-Records},
we have $R(\dom L_{f},\dom D)\subseteq R(\dom L_{f},\dom D')$. Thus,
\[
\left(\{\bot_{\dom R}\}\cup\bigcup_{\dom L_{f}\Subset\dom L}R(\dom L_{f},\dom D)\right)\subseteq\left(\{\bot_{\dom R}\}\cup\bigcup_{\dom L_{f}\Subset\dom L}R(\dom L_{f},\dom D')\right).
\]

Thus, 
\begin{equation}
\left|{\dom L}\multimap\dom D\right|\subseteq\left|{\dom L}\multimap\dom D'\right|.\label{eq:rec-mono-subset}
\end{equation}

Next, since $\dom D$ is a subdomain of $\dom D'$ when restricted
to elements of $\dom D$, we know: (i) the approximation relation
on $\dom D$ is the approximation relation on $\dom D'$ restricted
to $\dom D$; (ii) consistent pairs of $\dom D$ are consistent pairs
in $\dom D'$; and (iii) lubs, in $\dom D$, of consistent pairs of
elements of $\dom D$ are also their lubs in $\dom D'$. Thus, for
$d_{i},d_{j}\in\dom D$, $d_{i}\sqsubseteq_{\dom D}d_{j}\Leftrightarrow d_{i}\sqsubseteq_{\dom D'}d_{j}$,
$d_{i}\uparrow_{\dom D}d_{j}\Leftrightarrow d_{i}\uparrow_{\dom D'}d_{j}$
and $d_{i}\sqcup_{\dom D}d_{j}=d_{i}\sqcup_{\dom D'}d_{j}.$

Hence, according to the definition of the approximation, consistency
and lub relations for $\multimap$ (Equations~(\ref{eq:rec-approx}),~(\ref{eq:R-consistency})
and~(\ref{eq:R-lub})), the lub, in $\dom L\multimap\dom D$, of
a consistent pair of records is also their lub in $\dom L\multimap\dom D'$.
That is, respectively, for $r,r'\in\left|\dom L\multimap\dom D\right|$,
we have 
\begin{equation}
r\sqsubseteq_{\left(\dom L\multimap\dom D\right)}r'\Leftrightarrow r\sqsubseteq_{\left(\dom L\multimap\dom D'\right)}r',\label{eq:rec-mono-approx}
\end{equation}
\begin{equation}
r\uparrow_{\left(\dom L\multimap\dom D\right)}r'\Leftrightarrow r\uparrow_{\left(\dom L\multimap\dom D'\right)}r'\label{eq:rec-mono-cons}
\end{equation}
and
\begin{equation}
r\sqcup_{\left(\dom L\multimap\dom D\right)}r'=r\sqcup_{\left(\dom L\multimap\dom D'\right)}r'.\label{eq:rec-mono-lub}
\end{equation}

From equations~(\ref{eq:rec-mono-subset}),~(\ref{eq:rec-mono-approx}),~(\ref{eq:rec-mono-cons}),~(\ref{eq:rec-mono-lub}),
and the fact that $\bot_{\dom R}$ is the bottom element of both ${\dom L}\multimap\dom D$
and ${\dom L}\multimap\dom D'$, we can conclude using Definition
6.2 in~\cite{DomTheoryIntro} that 
\[
{\dom L}\multimap\dom D\Subset{\dom L}\multimap\dom D'.
\]

\end{proof}
In addition to being monotonic, continuity of a domain constructor
asserts that the lub of domains it constructs using a chain of input
domains is the domain it constructs using the lub of the chain of
input domains (\emph{i.e.}, that, for $\multimap$, the lub $\dom D$
of a chain of input domains $\dom D_{i}$ gets mapped by $\multimap$
to the lub, say domain $\dom R$, of the chain of output domains $\dom R_{i}=\dom L\multimap\dom D_{i}$.)
\begin{lemma}[$\multimap$ preserves lubs.]
\label{lem:Rec-Lubs} For a chain of domains $\dom D_{i}$, if $\dom D=\sqcup\dom D_{i}$,
$\dom R_{i}=\dom L\multimap\dom D_{i}$, and $\dom R=\dom L\multimap\dom D$,
then $\dom R=\sqcup\dom R_{i}$.\end{lemma}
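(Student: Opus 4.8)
The plan is to establish the two domain inclusions $\dom R_i \Subset \dom R$ and $\sqcup \dom R_i \Subset \dom R$ together with the reverse inclusion, so that $\dom R$ satisfies the defining property of the least upper bound of the chain $(\dom R_i)$. First I would observe that by Lemma~\ref{lem:Rec-Mono} the constructor $\multimap$ is monotonic, so from the chain $\dom D_1 \Subset \dom D_2 \Subset \cdots$ with $\dom D = \sqcup \dom D_i$ we immediately obtain that each $\dom R_i = \dom L \multimap \dom D_i$ is a subdomain of $\dom R = \dom L \multimap \dom D$ (since $\dom D_i \Subset \dom D$). Hence $\dom R$ is an upper bound of the chain $(\dom R_i)$, and therefore $\sqcup \dom R_i \Subset \dom R$. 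The real content is proving the reverse inclusion $\dom R \Subset \sqcup \dom R_i$, which amounts to showing that every element of $\dom R$ already appears in some $\dom R_i$ (or is a lub of such elements).

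The key step is to work at the level of finite elements, using the effective presentation developed in Section~\ref{sec:Rec-Effective-Presentation}. A finite record function $r = (tag(\{l_1,\dots,l_k\}), \{(\bot_{\dom L},\bot_{\dom D}),(l_1,d_1),\dots,(l_k,d_k)\})$ in $\dom R$ has finitely many entries $d_1,\dots,d_k$, each a finite element of $\dom D = \sqcup \dom D_i$. Since the lub of a chain of domains has as its finite elements exactly the union of the finite elements of the $\dom D_i$, each $d_j$ is a finite element of some $\dom D_{i_j}$. Because $(\dom D_i)$ is a chain, taking $N = \max_j i_j$ places all the entries $d_1,\dots,d_k$ simultaneously in $\dom D_N$, so the whole record function $r$ lies in $R(\dom L_f, \dom D_N) \subseteq \dom R_N$. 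Thus every finite element of $\dom R$ belongs to $\bigcup_i \dom R_i$, and since the finite elements of $\sqcup \dom R_i$ are precisely $\bigcup_i (\text{finite elements of } \dom R_i)$, the finitary bases of $\dom R$ and $\sqcup \dom R_i$ coincide.

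To finish I would invoke the fact that a domain is determined by its finitary basis (ideal completion): two domains sharing the same finitary basis under the same approximation ordering are equal. Equations~(\ref{eq:rec-mono-approx}),~(\ref{eq:rec-mono-cons}) and~(\ref{eq:rec-mono-lub}) from the proof of Lemma~\ref{lem:Rec-Mono} guarantee that the approximation, consistency, and lub relations on $\dom R$ restrict correctly to each $\dom R_i$, so the ordering on the shared basis is consistent across all the domains involved. Combining $\sqcup \dom R_i \Subset \dom R$ with the coincidence of finitary bases yields $\dom R = \sqcup \dom R_i$.

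I expect the main obstacle to be the step identifying the finite elements of $\dom D = \sqcup \dom D_i$ with $\bigcup_i$ of the finite elements of the $\dom D_i$, and in particular the careful argument that a single finite record function, having only finitely many field entries, can be localized into one $\dom D_N$ of the chain. The tag component makes this cleaner than in the pure function-space case, since records with distinct shapes are incomparable and each record's shape is fixed and finite; the only subtlety is ensuring that the collective localization of all $k$ entries respects the chain structure, which the finiteness of $k$ and the totality of the chain ordering make routine.
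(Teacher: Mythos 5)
Your proof is correct and follows essentially the same route as the paper's: the decisive step in both is that a record function has only finitely many entries, so by the chain property all of them can be localized into a single $\dom D_{N}$, placing the record in $\dom R_{N}$ and giving $\dom R\subseteq\sqcup\dom R_{i}$, with the reverse inclusion coming from monotonicity. The extra scaffolding you add (coincidence of finitary bases, ideal completion, and the ordering-consistency equations from Lemma~\ref{lem:Rec-Mono}) only makes explicit what the paper's terser argument leaves implicit.
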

\begin{proof}
Let $\dom Q$ be the lub of the chain of domains $\dom R_{i}=\dom L\multimap\dom D_{i}$
($\dom R_{i}$'s form a chain by the monotonicity of $\multimap$.)
Domain $\dom Q$ is thus the union of domains $\dom R_{i}$, \emph{i.e.},
$\dom Q=\sqcup\dom R_{i}=\bigcup_{i}({\dom L }\multimap\dom D_{i})$.

Domain $\dom Q$ is equal to $\dom R={\dom L\ensuremath{\multimap}\ }\dom D={\ensuremath{\dom L\multimap}}\bigcup_{i}\dom D_{i}$
because each element $q$ in $\dom Q$ ($q$ is a record function)
is an element of a domain ${\dom L }\multimap\dom D_{i}$ for
some $i$. Given $\dom D_{i}$ is a subset of $\dom D=\bigcup_{i}\dom D_{i}$,
$q$ will also appear in $\dom R$.

Similarly, a record function $r$ in $\dom R$ is an element of a
domain ${\dom L }\multimap\dom D_{i}$ for some $i$, because
every \emph{finite} subset of $\bigcup_{i}\dom D_{i}$ has to appear
in \emph{one} $\dom D_{i}$ (given that $\dom D_{i}$ is a chain of
domains.) Thus, by the definition of $\dom Q$, $r$ is also a member
of $\dom Q$.

This proves that $\dom Q=\dom R$.
\end{proof}
Lemmas~\ref{lem:Rec-Mono} and~\ref{lem:Rec-Lubs} prove that $\multimap$
is computable given effective presentations for $\dom L$ and $\dom D$
(or, equivalently, an effective presentation for $\dom D$.)

\subsection{\label{sub:Filtering-is-Fin-Proj-Proof}Filtering is a Finitary Projection}

In this section we prove that function \code{filter}, as defined
in Section~\ref{sub:Filtering-and-NOOP}, is indeed a finitary
projection, and thus that the domain $\dom O$ of valid objects
(Definition~\ref{Def:valid-obj} in Section~\ref{sub:Filtering-and-NOOP})
defined by the filtering function is a subdomain of Scott's universal
domain $\dom U$, and thus is indeed a domain.

To do so, we first prove a number of auxiliary propositions regarding
domain $\hat{\dom O}$.
\begin{proposition}
\label{prop:high-rank-no-approx-low-rank}In domain $\hat{\dom O}$,
higher-ranked objects do not approximate lower-ranked ones, \emph{i.e.},
$rank(o_{1})<rank(o_{2})$ implies $o_{2}\not\sqsubseteq o_{1}$\end{proposition}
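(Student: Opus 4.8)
The plan is to prove the logically equivalent contrapositive: whenever $o_2\sqsubseteq o_1$ in $\hat{\dom O}$, we have $rank(o_2)\leq rank(o_1)$. In other words, I would show that $rank$ is \emph{monotone} with respect to the approximation ordering, which is exactly the statement that each stage of the least-fixed-point construction of $\hat{\dom O}$ is downward-closed under $\sqsubseteq$. I would carry this out by induction on $rank(o_1)$, exploiting the fact that the approximation ordering on the right-hand side of Equation~(\ref{eq:NOOP-Domain-Equation}) is assembled component-wise from the orderings on the flat domain $\mathcal{S}$, on the record domains $\mathcal{L}\rec(-)$, and on the strict function space $(-)^{*}\strfunarr(-)$.

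For the base case, $rank(o_1)=0$ forces $o_1=\bot_{\hat{\dom O}}$, whose only approximant is $\bot_{\hat{\dom O}}$ itself, so $o_2=\bot_{\hat{\dom O}}$ and $rank(o_2)=0$. For the inductive step, assume $rank(o_1)=n+1$, so that $o_1=(sc_1,fr_1,mr_1)$ is non-bottom; if $o_2=\bot_{\hat{\dom O}}$ the claim is trivial, so take $o_2=(sc_2,fr_2,mr_2)$ non-bottom with $o_2\sqsubseteq o_1$. By the product ordering this yields $sc_2\sqsubseteq sc_1$ in $\mathcal{S}$, together with $fr_2\sqsubseteq fr_1$ and $mr_2\sqsubseteq mr_1$. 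Since $\mathcal{S}$ is flat, the signature component contributes nothing to rank. For the fields, the record ordering (Equation~(\ref{eq:rec-approx})) forces $fr_2$ and $fr_1$ to carry the same tag, hence the same shape, and to satisfy $o'_2\sqsubseteq o'_1$ field-by-field; as each $o'_1$ is a field of $o_1$ it has rank at most $n$, so the induction hypothesis gives $rank(o'_2)\leq rank(o'_1)\leq n$.

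The main obstacle is the methods component, where $mr_2\sqsubseteq mr_1$ lives in $\mathcal{L}\rec(\hat{\dom O}^{*}\strfunarr\hat{\dom O})$. Here I would argue that passing to a pointwise-smaller function cannot raise the rank: matching tags again give equal method shapes, and for each method label the corresponding functions satisfy $m_2\sqsubseteq m_1$, hence $m_2(os)\sqsubseteq m_1(os)$ for every input sequence $os$. The rank of a finite function element is governed by the ranks of the objects appearing in its input--output behaviour; both functions are defined over the \emph{same} lower-stage input domain, so inputs cannot distinguish them, while each output of $m_2$ approximates the corresponding output of $m_1$ and therefore, by the induction hypothesis, has rank bounded by that output of $m_1$, which is at most $n$. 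Thus every object witnessing the rank of $m_2$ has rank at most $n$, giving $rank(m_2)\leq rank(m_1)\leq n$. Combining the three components, all constituents of $o_2$ have rank at most $n$, so $rank(o_2)\leq n+1=rank(o_1)$, which completes the induction and establishes the proposition.
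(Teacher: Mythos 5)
Your overall strategy---proving the contrapositive (monotonicity of $rank$ with respect to $\sqsubseteq$, i.e., downward-closure of the construction stages) by induction on rank---is exactly the approach the paper takes; its entire stated proof is ``by strong induction on rank of objects.'' Your handling of the first two components is also sound: the signature component lives in a flat domain, and for fields the record ordering of Equation~\ref{eq:rec-approx} forces equal tags and componentwise approximation between objects of strictly smaller rank, so the induction hypothesis applies.

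The genuine gap is in the methods case, which is unfortunately the only place where the proposition has real content. You justify it by asserting that ``both functions are defined over the same lower-stage input domain, so inputs cannot distinguish them,'' and that the rank of $m_{2}$ is therefore governed by its outputs alone. Neither claim follows from $m_{2}\sqsubseteq m_{1}$: the hypothesis $rank(o_{1})=n+1$ tells you that $m_{1}$ is (the canonical extension of) a function over stage-$n$ arguments, but it tells you nothing of the sort about $m_{2}$. Pointwise ordering constrains outputs, yet a pointwise-\emph{smaller} function may require strictly \emph{more} input information to specify. Concretely, for finite elements $c\sqsubseteq c'$ of $\hat{\dom O}^{*}$ and $d$ of $\hat{\dom O}$, consider the step functions $c\searrow d$ and $c'\searrow d$ (where $c\searrow d$ yields $d$ on arguments above $c$ and $\bot$ otherwise). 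One checks that $c'\searrow d\sqsubseteq c\searrow d$ pointwise, even if $c'$ has arbitrarily high rank. Moreover $c'\searrow d$ is \emph{not} an element of stage $n+1$: a function appearing at stage $n+1$ gives the same value on an argument as on that argument's stage-$n$ approximation, whereas $c'\searrow d$ yields $d$ at $c'$ but $\bot$ at every stage-$n$ approximation of $c'$ (none of which is above $c'$). So ``pointwise below a stage-$(n+1)$ method'' does not imply ``appears by stage $n+1$,'' which is precisely the implication your induction step needs. A correct proof has to confront this step-function phenomenon head-on, using whatever representation of method elements and definition of $rank$ the construction of $\hat{\dom O}$ in~\cite{NOOP} actually fixes (the details are not reproduced in this paper); as written, your argument assumes the conclusion in the one case where it is in doubt.
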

\begin{proof}
By strong induction on rank of objects.
\end{proof}
To prove that \code{filter} defines a projection, in the sequel
we use the inductively-defined predicate \code{valid} (as defined
by Definition~\ref{Def:valid-obj} in Section~\ref{sub:Filtering-and-NOOP})
that applies to objects of $\hat{\dom O}$. Note that, in addition
to $\bot_{\dom O}$, objects with empty field and method records provide
base cases for the definition of \code{valid}.
\begin{lemma}[\texttt{filter} returns the closest valid object that approximates
its input object]
\texttt{\label{lem:filter-ret-closest-valid}}For an object $o$
of $\hat{\dom O}$,\texttt{ filter($o$)$\sqsubseteq$$o$ $\wedge$
valid(filter($o$))} $\wedge$ $\forall o'$ ($o'$$\sqsubseteq$$o$
$\wedge$ \texttt{valid($o'$)} $\Longrightarrow$\texttt{ $o'$}
$\sqsubseteq$\texttt{ filter($o$)})\texttt{}\end{lemma}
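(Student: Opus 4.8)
The plan is to prove all three conjuncts simultaneously by strong induction on the rank of $o$ in the iterative construction of $\hat{\dom O}$. Proposition~\ref{prop:high-rank-no-approx-low-rank} is the key bookkeeping tool: from $o'\sqsubseteq o$ it gives $rank(o')\le rank(o)$, while the field and method components sitting inside a non-bottom $o$ have strictly smaller rank, so every recursive call of \code{filter}, \code{filter-obj-sig}, and \code{filter-meth-sig} is made on a strictly lower-ranked argument and the induction hypothesis applies to it. The base cases are $o=\bot_{\dom O}$ and the objects with empty field and method records: for $\bot_{\dom O}$ all three conjuncts hold trivially ($\bot_{\dom O}$ approximates everything, is valid by Definition~\ref{Def:valid-obj}, and lies below every valid approximant), and the empty-record objects are immediate since \code{filter} returns them unchanged.

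\textbf{Shape-mismatch case.} First I would dispose of the case where the field/method shapes of $se(nm)$ disagree with the shapes of $fr$/$mr$, so that \code{filter}$(o)=\bot_{\dom O}$. Conjuncts~1 and~2 are immediate. For conjunct~3, let $o'\sqsubseteq o$ be valid; if $o'$ is non-bottom then, since the signature-closure coordinate lives in the flat domain $\mathcal{S}$ and two records can be ordered only when their tags agree, $o'$ must carry the \emph{same} signature closure and the \emph{same} record shapes as $o$. Validity of $o'$ would then force those shapes to match the signature shapes of $se(nm)$, contradicting the hypothesis of this case; hence $o'=\bot_{\dom O}\sqsubseteq$ \code{filter}$(o)$.

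\textbf{Main case, conjuncts 1 and 2.} When the shapes match, \code{filter}$(o)$ rebuilds $o$ with the same signature closure and the same record tags, replacing each field value $o_i$ by \code{filter-obj-sig}$(s_i,o_i)$ and each method $m_j$ by \code{filter-meth-sig}$(misj,mosj,m_j)$. For conjunct~1 I would argue coordinatewise: each \code{filter-obj-sig}$(s_i,o_i)$ is either $\bot_{\dom O}$ or equals \code{filter}$(o_i)\sqsubseteq o_i$ by the induction hypothesis; for the methods I would use that \code{filter-meth-sig} pre-filters its argument to some $vos\sqsubseteq os$, apply monotonicity of the embedded function to get $m_j(vos)\sqsubseteq m_j(os)$, and then the induction hypothesis on the output to conclude \code{filter-obj-sig}$(mosj,m_j(vos))\sqsubseteq m_j(os)$. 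Conjunct~2 checks the three bullets of Definition~\ref{Def:valid-obj}: the shape bullet holds because the signature closure and record tags are preserved; the field bullet holds because \code{filter-obj-sig} returns $\bot_{\dom O}$ exactly when subsigning fails and otherwise returns a valid object (induction hypothesis) whose signature closure, left unchanged by \code{filter}, subsigns $s_i$; and the method bullet holds because \code{filter-meth-sig} is built precisely so that conforming inputs are produced by the inner \code{filter-obj-sig} calls and the output is passed through \code{filter-obj-sig}$(mosj,\cdot)$, guaranteeing a valid result subsigning $mosj$.

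\textbf{Main case, conjunct 3 (the crux).} Let $o'\sqsubseteq o$ be valid and non-bottom; as in the shape-mismatch case it shares $o$'s signature closure and record tags, so it suffices to compare the two objects coordinatewise. For a field, $o'_i\sqsubseteq o_i$, and when $o'_i$ is non-bottom, flatness of $\mathcal{S}$ forces $o_i$ to carry the same signature closure as $o'_i$; validity of $o'$ makes that closure subsign $s_i$, so \code{filter-obj-sig}$(s_i,o_i)=$ \code{filter}$(o_i)$, and the induction hypothesis (closest-approximant property at $o_i$) yields $o'_i\sqsubseteq$ \code{filter}$(o_i)$. The genuinely delicate step is the method coordinate: I must show, for \emph{every} input $os$, that $m'_j(os)\sqsubseteq$ \code{filter-obj-sig}$(mosj,m_j(vos))$, where $vos$ is the conforming projection of $os$. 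The intended argument is that $vos$ is a sequence of valid objects subsigning the parameter closures, so the conformance clause of Definition~\ref{Def:valid-obj} applies to the valid method $m'_j$ on $vos$; combining $m'_j\sqsubseteq m_j$, monotonicity of the embedded functions, and the induction hypothesis applied to the output object $m_j(vos)$ then pins $m'_j$ below the filtered method. \textbf{This method case is the main obstacle:} reconciling the pointwise function ordering, which ranges over \emph{all} inputs including those that do not subsign the declared parameter closures, with the input-prefiltering performed by \code{filter-meth-sig} requires reading Definition~\ref{Def:valid-obj} as constraining a valid method's behaviour on an arbitrary argument through its conforming projection, and it is here that continuity of the embedded methods together with Proposition~\ref{prop:high-rank-no-approx-low-rank} must be invoked to keep the induction well-founded.
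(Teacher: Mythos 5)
Your proposal takes essentially the same approach as the paper's proof, which consists of exactly one sentence: strong induction on the rank of objects, with $\bot_{\dom O}$ and the rank-1 (empty-record) objects as base cases and Proposition~\ref{prop:high-rank-no-approx-low-rank} invoked for the inductive case. Your write-up is in fact substantially more detailed than the paper's own sketch---in particular, your identification of the method-coordinate case of conjunct~3, where \code{valid} must be read as constraining a method's behaviour on arbitrary arguments through its conforming projection, is a genuine subtlety that the paper's proof passes over in silence.
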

\begin{proof}
By strong induction on rank of objects, noting that, for the base
case, \texttt{filter(o)} diverges (\emph{i.e.}, ``returns'' $\bot_{\dom O}$)
for the rank 0 input object $\bot_{\dom O}$, and if an object $o$
of rank 1 is invalid then \texttt{filter(o)} also returns $\bot_{\dom O}$
(no distinct objects of rank 1 approximate each other.) Proposition~\ref{prop:high-rank-no-approx-low-rank}
is used for the inductive case.\end{proof}
\begin{theorem}
\texttt{\label{thm:filter-is-fin-proj}filter} is a finitary projection.\end{theorem}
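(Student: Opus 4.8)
The plan is to establish that \code{filter} satisfies the defining conditions of a finitary projection on $\hat{\dom O}$, regarded as a subdomain of Scott's universal domain $\dom U$: namely that \code{filter} is continuous, deflationary ($\code{filter}(o)\sqsubseteq o$ for all $o$), idempotent ($\code{filter}\circ\code{filter}=\code{filter}$), and finitary in the sense that it sends finite (compact) elements of $\hat{\dom O}$ to finite elements. By the standard correspondence between finitary projections and subdomains, these conditions yield at once that the image of \code{filter} --- the set $\dom O$ of valid objects --- is a subdomain of $\dom U$, which is the ultimate goal of this subsection. Throughout, the main instrument is Lemma~\ref{lem:filter-ret-closest-valid}, which characterizes $\code{filter}(o)$ as the \emph{greatest} valid object approximating $o$.

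Three of the four conditions follow quickly from this characterization. Deflationarity is exactly the first conjunct of the lemma. For idempotency, $\code{filter}(o)$ is valid and approximates itself, so the maximality clause applied to the input $\code{filter}(o)$ gives $\code{filter}(o)\sqsubseteq\code{filter}(\code{filter}(o))$, while deflationarity gives the reverse, hence equality. Monotonicity (which I will need en route to continuity) is just as short: if $o_{1}\sqsubseteq o_{2}$ then $\code{filter}(o_{1})$ is a valid object approximating $o_{2}$, so maximality applied to $o_{2}$ yields $\code{filter}(o_{1})\sqsubseteq\code{filter}(o_{2})$. For finitariness I would argue by rank: on the non-matching-shape branch \code{filter} returns $\bot_{\dom O}$, and on the other branch it copies the signature closure unchanged, preserves the (finite) shapes of the fields- and methods-records, and replaces each bound object by a recursive call on a strictly lower-ranked object or by $\bot_{\dom O}$. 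Hence \code{filter} never raises rank and never enlarges a record's shape, so by Proposition~\ref{prop:high-rank-no-approx-low-rank} it carries finite elements to finite elements.

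The remaining condition, continuity, is where I expect the real difficulty to lie. Monotonicity already gives $\bigsqcup_{s\in S}\code{filter}(s)\sqsubseteq\code{filter}(\bigsqcup S)$ for every directed $S$; the substantive inequality is the reverse. Here I would use the algebraicity of $\hat{\dom O}$ together with a separately proved lemma that the predicate \code{valid} is downward closed (every finite approximant of a valid object is valid). Granting this, for any finite $e\sqsubseteq\code{filter}(\bigsqcup S)$ the element $e$ is valid and satisfies $e\sqsubseteq\bigsqcup S$; compactness of $e$ and directedness of $S$ give $e\sqsubseteq s$ for some $s\in S$, so $e$ is a valid object approximating $s$ and therefore $e\sqsubseteq\code{filter}(s)\sqsubseteq\bigsqcup_{t\in S}\code{filter}(t)$. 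Taking the lub over all such $e$ yields $\code{filter}(\bigsqcup S)\sqsubseteq\bigsqcup_{t\in S}\code{filter}(t)$, completing continuity.

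The delicate point --- and the step I expect to be the main obstacle --- is the downward-closure of \code{valid}. The shape-matching and field-subsigning clauses are comparatively straightforward because signature closures live in a flat domain, so a non-bottom approximant of a field carries the very same closure (and a bottom approximant is trivially valid); an induction on rank handles the recursive descent into fields. The genuinely subtle clause is method conformance, since it quantifies over all admissible inputs: I must check that a function $m'\sqsubseteq m$ still returns, on each valid input tuple, an output whose signature closure subsigns the declared result closure. This again reduces to the flatness of $\mathcal{S}$ (a smaller output is either $\bot_{\dom O}$ or carries the same closure as the larger one) combined with the inductive hypothesis on the lower-ranked argument and result objects. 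Once downward-closure of \code{valid} is secured, the four conditions are in hand and Scott's characterization of finitary projections completes the proof.
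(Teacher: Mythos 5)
Your proposal is sound in outline and matches the paper on the two easy conditions: deflationarity and idempotence are obtained, as in the paper, from Lemma~\ref{lem:filter-ret-closest-valid} (the paper's steps 1 and 2). On the two substantive conditions, however, you take a genuinely different route. For continuity the paper argues compositionally: \code{filter} is written in a lazy functional meta-language out of continuous primitives (\code{rec-shp}, \code{map}, \code{se\_clos}, \ldots), and continuity is inherited through composition, $\lambda$-abstraction, and recursion. You instead prove continuity order-theoretically, from algebraicity of $\hat{\dom O}$ plus a new lemma: \code{valid} is downward closed. That lemma is true, your sketch of it (flatness of $\mathcal{S}$, induction on rank, with the method-conformance clause handled by the fact that a smaller output is either $\bot_{\dom O}$ or carries the same closure) is the right proof, and together with Lemma~\ref{lem:filter-ret-closest-valid} it gives your compactness argument cleanly. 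Note also that valid objects are exactly the fixed points of \code{filter}: if $o$ is valid then the maximality clause gives $o\sqsubseteq\mathtt{filter}(o)\sqsubseteq o$. Your downward-closure lemma is in fact the mathematical substance hiding behind the paper's own fourth step, which verifies the fixed-point criterion of Equation~(\ref{eq:filter-cond}) with only a terse appeal to ideals; in that sense your treatment is more explicit than the paper's.

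The one genuine gap is your finitariness step. ``Projection that sends compact elements to compact elements'' is a sufficient criterion for finitariness on a bounded-complete algebraic domain (though not the definition, so this implication itself needs a line of proof), but your rank argument does not establish it. Finite rank is not the same as compactness in $\hat{\dom O}$: a finite-rank object's method components live in a function domain $\hat{\dom O}^{*}\strfunarr\hat{\dom O}$ and may be non-compact, and \code{filter-meth-sig} pre- and post-composes a method with filtering maps --- an operation that does not obviously preserve compactness of functions, since composing a step function $(a\Rightarrow b)$ with a projection on the input side a priori yields $\bigsqcup\{(v\Rightarrow\cdot)\mid v\text{ valid},\,v\sqsupseteq a\}$, potentially an infinite lub of steps. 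Proposition~\ref{prop:high-rank-no-approx-low-rank} says nothing about this; what collapses that lub to a single step (or to $\bot$) is, once again, downward closure of \code{valid} (equivalently: nothing valid lies above an invalid object). The good news is that the repair is already inside your proposal: once downward closure is proved, you get directly, for every $o$,
\[
\mathtt{filter}(o)=\bigsqcup\{e\sqsubseteq o\mid e\text{ compact},\ \mathtt{filter}(e)=e\},
\]
because every compact $e\sqsubseteq\mathtt{filter}(o)$ is valid (hence a fixed point), and every compact fixed point below $o$ is below $\mathtt{filter}(o)$ by maximality. This is precisely the finitary-projection criterion the paper invokes as Equation~(\ref{eq:filter-cond}), read with finite witnesses. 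So you should discard the compact-preservation detour and its rank argument, and let your downward-closure lemma do double duty: it yields both continuity (as you argue) and finitariness in one stroke.
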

\begin{proof}
We prove that\texttt{ filter} is a finitary projection, on four steps.%\end{proof}
\begin{enumerate}
\item \texttt{filter} is a retraction: \texttt{filter(filter(o)) = filter(o)}

\begin{proof}
Obvious from definition of \texttt{filter}, and that, by Lemma~\ref{lem:filter-ret-closest-valid},
function \texttt{filter} returns a valid object (\emph{i.e.},\texttt{
valid(filter(o))}).
\end{proof}
\item \texttt{filter} approximates identity: \texttt{filter(o) $\sqsubseteq$
o}

\begin{proof}
By Lemma~\ref{lem:filter-ret-closest-valid}.
\end{proof}
\item \texttt{filter} is a continuous function

\begin{proof}
Direct, from the continuity of functions used to define \code{filter}
(such as \code{rec-shp}, \code{map}, \code{se\_clos}, etc.), and noting
the closure of continuous functions under composition and lambda abstraction.
\end{proof}
\item \code{filter} is finitary

\begin{proof}
The condition in point 2 of Theorem 8.5 in~\cite{DomTheoryIntro},
namely 
\[
a(x)=\{y\in\dom O|\exists x'\in x.x'ax'\wedge y\sqsubseteq x'\},
\]
can be rewritten for the filtering function \code{filter} as
\begin{equation}
\mathtt{filter}(o)=\{p\in\dom O|\exists o'\in\dom O.o'\sqsubseteq o\wedge o'=\mathtt{filter}(o')\wedge p\sqsubseteq o'\}.\label{eq:filter-cond}
\end{equation}
Objects of domain $\hat{\dom O}$ are in one-to-one correspondence with principal
ideals over their finitary basis. The filtering function \code{filter}
returns, as its output, the closest valid object to its input object
(The object returned is a well-defined object, and it is a fixed point
of the filtering function.) Thus, given that objects correspond to strong ideals
in the finitary basis of $\hat{\dom O}$, they correspond to downward-closed
sets. Condition~(\ref{eq:filter-cond}) is thus true for all objects in $\hat{\dom O}$.\end{proof}
\end{enumerate}
Based on the definition of finitary projections, function \code{filter} is thus a finitary projection.\end{proof}

\bibliographystyle{authordate3}%plainnat
%\bibliography{D:/Work/NOOP}

\end{document}